\documentclass[10.5pt]{article}

\usepackage{epsfig,amsmath,amssymb,bm}
\usepackage[legacycolonsymbols]{mathtools}
\usepackage{typearea}\typearea{13}
\usepackage{url}

\usepackage{graphicx}%
\usepackage{multirow}%
\usepackage{amsmath,amssymb,amsfonts}%
\usepackage{amsthm}%
\usepackage{mathrsfs}%
\usepackage[title]{appendix}%
\usepackage{xcolor}%
\usepackage{textcomp}%
\usepackage{manyfoot}%
\usepackage{booktabs}%
\usepackage{algorithm}%
\usepackage{algorithmicx}%
\usepackage{algpseudocode}%
\usepackage{listings}%
\newtheorem{prop}{Prop}[section]

\newtheorem{lemma}[prop]{Lemma}

\newtheorem{theorem}[prop]{Theorem}

\makeatletter
\@addtoreset{equation}{section}
\makeatother

\makeatletter
\long\def\@makecaption#1#2{{\small
\advance\leftskip1cm
\advance\rightskip1cm
\vskip\abovecaptionskip
\sbox\@tempboxa{#1: #2}%
\ifdim \wd\@tempboxa >\hsize
 #1: #2\par
\else
\global \@minipagefalse
\hb@xt@\hsize{\hfil\box\@tempboxa\hfil}%
\fi
\vskip\belowcaptionskip}}
\makeatother
\def\eq#1\en{\begin{equation}#1\end{equation}}  
\def\eqa#1\ena{\begin{align}#1\end{align}}
\def\eqg#1\eng{\begin{gather}#1\end{gather}}
\newcommand{\lb}[1]{\label{e:#1}}
\newcommand{\rlb}[1]{\eqref{e:#1}} 
\newcommand{\nl}{\notag\\}

\newcommand{\up}{\uparrow}

\newcommand{\bs}{\backslash}
\newcommand{\sumtwo}[2]%
{\mathop{\sum_{#1}}_{#2}}
\newcommand{\sumthree}[3]%
{\mathop{\mathop{\sum_{#1}}_{#2}}_{#3}}
\newcommand{\sumfour}[4]%
{\mathop{\mathop{\mathop{\sum_{#1}}_{#2}}_{#3}}_{#4}} 
\newcommand{\snorm}[1]{\Vert#1\Vert}

\newcommand{\sbkt}[1]{\langle#1\rangle}
\newcommand{\bbkt}[1]{\bigl\langle#1\bigr\rangle}

\newcommand{\bbE}{\mathbb{E}}

\newcommand{\bbR}{\mathbb{R}}
\newcommand{\bbZ}{\mathbb{Z}}

\newcommand{\bsb}{\boldsymbol{b}}
\newcommand{\bstau}{\boldsymbol{\tau}}
\newcommand{\bss}{\boldsymbol{\sigma}}
\newcommand{\bssa}{\boldsymbol{\sigma}_\mathrm{a}}
\newcommand{\bssb}{\boldsymbol{\sigma}_\mathrm{b}}
\newcommand\calB{\mathcal{B}}
\newcommand\calC{\mathcal{C}}

\newcommand\dcalB{\partial\mathcal{B}}
\newcommand\dcalC{\partial\mathcal{C}}

\newcommand{\La}{\Lambda}
\newcommand{\LaL}{\Lambda_L}
\newcommand{\dLaL}{\partial\Lambda_L}
\newcommand{\Lal}{\Lambda_\ell}
\newcommand{\Laa}{\Lambda_\mathrm{a}}
\newcommand{\Lab}{\Lambda_\mathrm{b}}

\newcommand{\limsupL}{\limsup_{L\up\infty}}
\newcommand{\liminfL}{\liminf_{L\up\infty}}
\newcommand{\limL}{\lim_{L\up\infty}}

\newcommand{\qbr}{q_{\rm br}}
\newcommand{\qEA}{q_{\rm EA}}

\begin{document}

%%%%the following title part is used when making the submission version.
\title{ 
Self-averaging 
 of replica overlaps\\ in the random field 
Edwards-Anderson model}
\author{C. Itoi$^1$ and Y. Sakamoto $^2$
\\
\\
$^1$ Department of Physics, Graduate School of Science $\&$ Technology,
Nihon University, \\
Kanda-Surugadai, Chiyoda, Tokyo 101-8308, Japan\\
\\
$^2$ Laboratory of Physics, College of Science $\&$ Technology,
 Nihon University, \\
Narashinodai, Funabashi-city, Chiba, 274-8501,Japan }
\maketitle

\abstract{The self-averaging of the replica overlap is proven in the Edwards-Anderson (EA) model under random field almost everywhere in the coupling constant space in any dimension. The EA order parameter is represented in terms of the derivative of the free energy density with respect to the random field strength, regardless of boundary conditions. Tasaki's correlation inequality for finite-dimensional spin glass models shows that the expectation of the squared replica overlap is bounded by the squared EA order parameter. These simple evaluations enable us to prove that the variance of the replica overlap vanishes in the infinite-volume limit. The self-averaging of the replica bond overlap is proven also in the EA model with Gaussian exchange interaction without random field. Short-range spin glass models have been shown to behave differently from mean-field spin glass models with RSB phase.}\\

\noindent
keywords: spin glass, short-range interactions, concentration, replica symmetry breaking, critical dimension

%%%%%%%% the end of preprint version

\tableofcontents
\newpage
%%%%%%%%%%%%%%%%%%
\section{Introduction}
\label{s:intro}

%On the other hand,  
The Edwards-Anderson (EA) model \cite{EA} has been studied extensively 
as a well-known spin glass model with random short-range interactions. 
Nonetheless, its complexity still leaves several intriguing questions unsolved.
 It is an important question whether the EA model has replica symmetry breaking (RSB) phase 
 in some higher dimension or not. Although this  question in statistical physics has been  discussed for more than four decades after the discovery of the Parisi formula \cite{P0} for the SK  model, there has been no clear answer for the EA model.
In statistical physics, there is an empirical prediction stating that 
a finite-dimensional spin model above its lower critical dimension has a topologically equivalent 
  phase diagram to that in the corresponding mean-field spin model.
Further, a model above its upper critical dimension exhibits exactly the same critical behavior 
as the corresponding mean-field spin model.
Since RSB phase exists in mean-field spin glass models \cite{P0,T1,T0,T}, for example 
the Sherrington-Kirkpatrick (SK) model \cite{SK},
this empirical prediction implies that the EA model has an RSB phase in sufficiently high dimension.
%There are two conflicting ideas in theoretical physics `the RSB picture' by Parisi \cite{P0}
%and `the droplet picture' by  Fisher and Huse \cite{FH} to understand the low temperature behavior of %finite-dimensional spin glass models. 
%Although the droplet picture is primarily motivated by the behavior of the three-dimensional EA spin glass, %it is formulated in general finite dimensions. 

%There have been several criticisms for the existence of RSB phase in finite-dimensional spin models. 
%Newman and Stein have partially excluded the existence of RSB phase in the EA model \cite{NS,NewmanStein1997,NewmanStein2002}. 
%They have claimed that a finite-dimensional spin glass model should have a pure Gibbs state, then the RSB picture is unnatural.  

%There are several rigorous results for RSB in low temperature region of finite-dimensional spin models
%with random interactions.
%Nishimori and Sherrington have shown that the replica overlap is concentrated at its expectation value
%on the Nishimori line \cite{N} lying outside the spin glass phase in the EA model \cite{NS1}.  

Recently, Chatterjee has proven that the variance of the replica overlap vanishes in
the random field Ising model in any dimension and almost everywhere in the coupling constant space \cite{Chatterjee2015}. While this result is accepted as a natural consequence, analogous to the concentration of magnetization in the ferromagnetic Ising model under a uniform field, this rigorous result has invalidated several published claims of self-averaging violation of the replica overlap and 
existence of RSB phase in the random field Ising model \cite{BD,MY}.
This theorem  is proven using the Fortuin-Kasteleyn-Ginibre (FKG) inequality \cite{FKG} and
the Ghirlanda-Guerra identities \cite{AC,GG}. 
While the Ghirlanda-Guerra identities are well-known to hold  universally in  spin systems
with Gaussian random interactions,  
the FKG inequality is valid only in the random field Ising model with non-negative exchange interactions. 

The Griffiths theorem for ferromagnetic spin systems \cite{Gff} has been extended to that for other
spin models, for example for quantum anti-ferromagnets \cite{KT}.    
 Griffiths-type theorems for finite-dimensional spin glass models have been obtained as an extension of the original theorem in \cite{I1}. 
 %Several upper bounds on the variance of the replica overlap have been obtained in terms of order parameters including the EA order parameter. 
 The EA order parameter is defined by the  thermal averaged replica overlap maximized over boundary condition. The EA order parameter is represented in terms of the derivative of free energy density with respect to the RSB perturbation parameter. The obtained Griffiths-type theorem claims that the finite variance of the replica overlap in the replica symmetric Gibbs state leads to the existence of spontaneous replica symmetry breaking with a finite EA order parameter. The theorem for finite-dimensional spin glass models has been proven by that representation and Tasaki's inequality \cite{HT}.
%, which enables us to prove that the expectation of the
%squared replica overlap is bounded by the EA order parameter from above in the EA model in any dimension.

  In the present paper, we study the Edwards-Anderson model under independent and identically distributed (i.i.d.) Gaussian random field. It is proven that the variance of the replica overlap converges to zero in the infinite-volume limit for almost all coupling constants in any dimension. Our result can be regarded as an extension of Chatterjee's theorem \cite{Chatterjee2015} to finite-dimensional spin glass models under random. Random field provides a natural perturbation to break spin-flip symmetry, suppresses thermal fluctuations and favors locally stable spin configurations. Understanding whether the overlap remains self-averaging in such a setting is directly related to the possibility of RSB in finite-dimensional spin glasses. The proof relies on a representation of the EA order parameter through derivatives of the free energy density and on Tasaki's correlation inequality \cite{HT}.  We exploit several properties of the free energy density, such as concavity, 
differentiability almost everywhere in the coupling constant space and independence of boundary conditions. As in the proofs in \cite{Chatterjee2015,I1}, the proof needs the concavity of the free energy density
and its differentiability almost everywhere. Especially for the random field EA model, it is essential that the free energy density is independent of boundary conditions.
  The key idea of the proof is based on the representation of the EA order parameter in terms of the derivative of the free energy density with respect to the strength of random field. 
   This representation of the EA order parameter differs from the one used in \cite{I1}, since the random field preserves the replica symmetry.   Also in the present case, Tasaki's correlation inequality \cite{HT} 
   to prove a lemma in \cite{I1} is utilized again to prove that the upper bound on the expectation of the squared replica overlap is written in terms of the EA order parameter. 
   Then, the replica overlap is self-averaging
 in finite-dimensional spin glass models under Gaussian random field almost everywhere in the coupling constant space in any dimension, unlike RSB phase in the SK model under random field.
 The random field EA model behaves differently from mean-field spin glass models,
 since the self averaging violation of the replica overlap is well-known 
 as the Almeida-Thouless (AT) instability  \cite{AT,Che} in the SK model under magnetic field.
 
 We further establish self-averaging of the replica bond overlap in the EA model with Gaussian couplings and without random fields.
In this case, the inverse temperature plays the same role as the random field strength.
This result supports several published claims for  the triviality of the replica bond overlap 
\cite{KM,NS,NewmanStein1997,NewmanStein2002}.
Since the variance of replica bond overlap 
does not vanish in RSB phase in the SK model in the infinite-volume limit \cite{AfCh,P0,Pn,T1,T0,T}, 
we show different nature of
the finite-dimensional spin glass models from that of the mean-field model also without random field.  
Then, we suggest limitations of the empirical prediction previously stated.

%%%%%%%%%%%%%%%%%%
\section{Definitions}
 \label{s:Def}
%%%%%%%%%%%%%%%%%%
\subsection{Single system}
\label{s:single}
For $d=1,2,\ldots$, let us regard $\bbZ^d$ as the infinite $d$-dimensional hyper cubic lattice, and denote its elements, i.e. sites, as $x,y\ldots\in\bbZ^d$.
The distance between two sites $x,y\in\bbZ^d$ is defined as
\eq
|x-y|\coloneqq\snorm{x-y}_1=\sum_{i=1}^d|x_i-y_i|,
\en
where we wrote $x=(x_1,\ldots,x_d)$.
For a positive integer  $L$ , consider a $d$-dimensional 
hyper cubic lattice with a linear size $L$
\eq
\LaL:= [1,L]^d \cap \bbZ^d,
\lb{LaL}
\en
and its boundary
\eq
\dLaL\coloneqq\bigl\{u\in\bbZ^d\bs\LaL\,\bigl|\,|u-x|=1\ \text{for some}\ x\in\LaL\bigr\}.
\en
The set of bonds, i.e., unordered sets of neighboring sites in $\LaL$ is denoted as
\eq
\calB_L\coloneqq\bigl\{\{x,y\}\,|\,x,y\in\LaL,\ |x-y|=1\bigr\},
\en
and the set of boundary bonds, i.e., oriented pairs of neighboring sites in $\LaL$ and $\dLaL$ as
\eq
\dcalB_L\coloneqq\bigl\{(x,u)\,\bigl|\, x\in\LaL,\ u\in\dLaL,\ |x-u|=1\,\bigr\}.
\en
See Figure~\ref{f:L7}.
\vspace{-3.3cm}
\begin{figure}[H]
\begin{center}
{\includegraphics[width=7truecm]{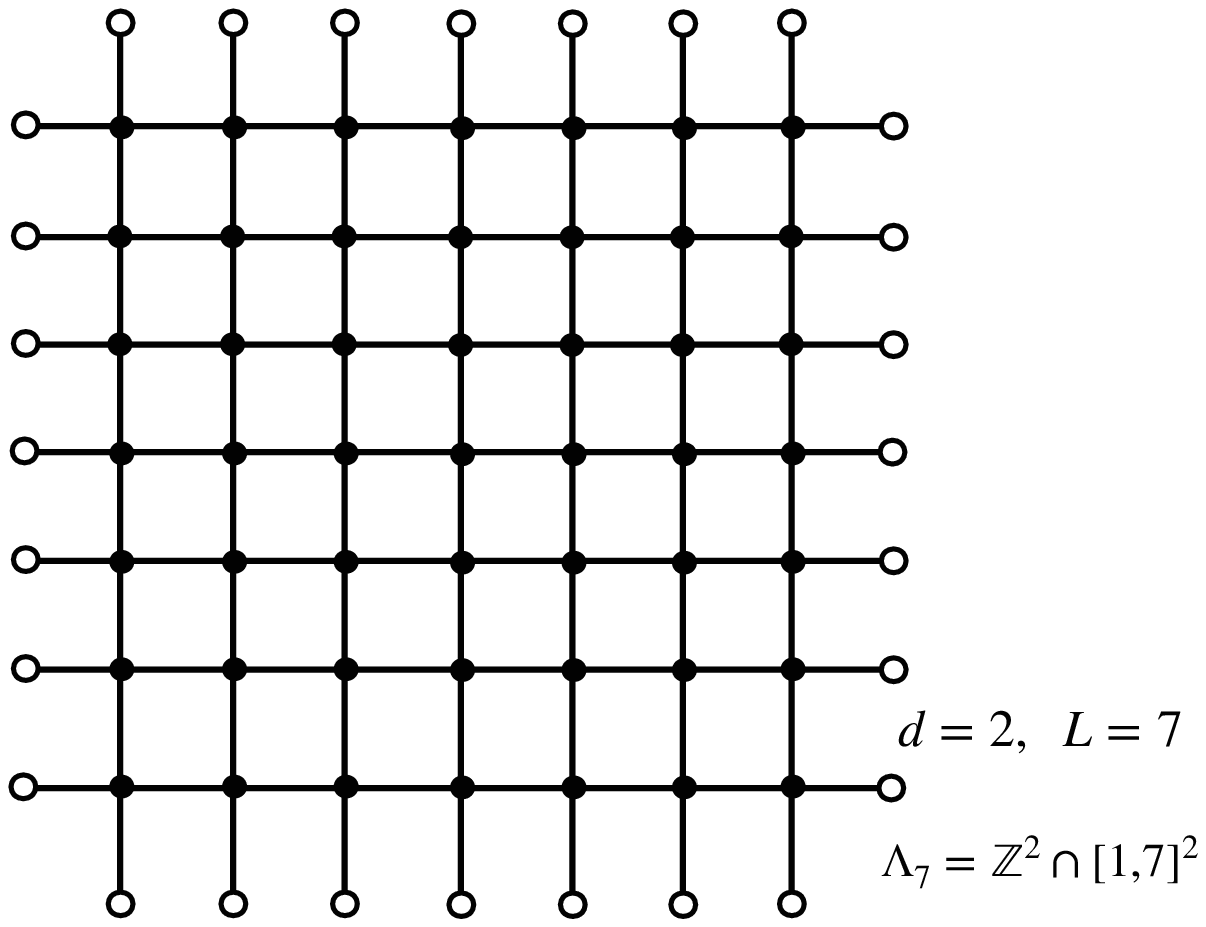}}
\vspace{-2.5cm}
\caption%[dummy]
{
Black dots and white dots represent the sites in $\La_7$ and its boundary $\partial\La_7$, respectively.
Solid lines represent bonds in $\calB_7$ and $\partial\calB_7$.
}
\label{f:L7}
\end{center}
\end{figure}
\vspace{-.8cm}
Assume that an Ising spin described by the spin variable $\sigma_x\in\{1,-1\}$ sits at each site $x\in\LaL$.
A spin configuration, i.e., the collection of all spin variables on $\LaL$, is denoted as $\bss=(\sigma_x)_{x\in\LaL}$, and
$\calC_L$ denotes the set of all spin configurations on $\LaL$.
The boundary spin at each boundary site $u\in\dLaL$ is
described by a continuous variable $b_u\in[-1,1]$.
The collection of all $b_u$ for $u\in\dLaL$ is denoted as $\bsb=(b_u)_{u\in\dLaL}$, and the set of all $\bsb$ is denoted as $\dcalC_L$.

For any $x,y\in\bbZ^d$ such that $|x-y|=1$, let $J_{x,y}=J_{y,x}\in\bbR$ be a random exchange interaction between sites $x$ and $y$, and for any $x\in\bbZ^d$, let $h g_x\in\bbR$ be 
%the standard Gaussian random variable at site $x$ which represents  
a random field at site $x$ with a
field strength $h>0$.
%, and all $J_{x,y}$ are independent and identically distributed
Let  $\bm J:=(J_{x,y})_{\{x,y\}\in {\cal B}_L}$ and $\bm g:=(g_x)_{x\in \Lambda_L }$
be the collection of all $J_{x,y}$ such that $\{x,y\}\in {\cal B}_L$ and all $g_x$ such that   $x\in\LaL$, and $\bbE\,F({\bm J,  \bm g})$ denotes the expectation value  of a function $F({\bm J,  \bm g})$
over random variables  $\bm J$ and $\bm g$. All $\bm g$ 
are  independent and identically distributed (i.i.d.) 
standard Gaussian random variables. All $\bm J$ are i.i.d. random variables, and
the probability distribution of $J_{x,y}$ is arbitrary, except for the assumption
\eq
\bbE\,|J_{x,y}|<\infty.
\lb{EJ}
\en
Then the Hamiltonian of the Edwards-Anderson (EA) model \cite{EA} under  random magnetic fields
 is defined by
\begin{equation}
H_L(\bss;\bm J,h\bm g, \bsb)\coloneqq - \sum_{\{x,y\}\in\calB_L} J_{x,y}\,\sigma_x \sigma_y
-\sum_{(x,u)\in \dcalB_L}(J_{x,u}\,\sigma_x+h~g_u) b_u
- \sum_{x \in \LaL}h~ g_x\,\sigma_x,
\lb{RSHamil}
\end{equation}
where $\bsb\in\dcalC_L$ defines the boundary condition of the spin system.
Standard choices are the open boundary condition with $b_u=0$ for all $u\in\dLaL$ and the plus boundary condition with  $b_u=1$ for all $u\in\dLaL$.

Let us define thermal expectation values and thermodynamic functions.
The thermal average of an arbitrary function $F(\bss)$ of spin configuration $\bss\in\calC_L$
 at  inverse temperature $\beta >0$ is defined by
\eq
\sbkt{F}_{L,\beta;{\bm J, h \bm g},\bsb}
= \frac{1}{Z_L(\beta;{\bm J, h \bm g},\bsb)} \sum_{\bss \in\calC_L} F(\bss)\,e^{ - \beta H_L(\bss;{\bm J, h \bm g},\bsb)},
\lb{<F>}
\en
where the partition function is defined by
\begin{equation}
Z_L(\beta;{\bm J, h \bm g},\bsb) \coloneqq \sum_{\bss \in\calC_L} e^{ - \beta H_L(\bss;{\bm J, h \bm g},\bsb)}.
\end{equation}
For necessity, consider the case
 that the boundary configuration $\bsb$ can depend on the inverse temperature $\beta$ and the interactions 
 ${\bm J, h \bm g}$. In such case, the dependence is represented explicitly as $\bsb(\beta,{\bm J, h \bm g})$.
Define the free energy density by
\begin{equation}
\phi_L(\beta;{\bm J, h \bm g}, \bsb(\beta,{\bm J, h \bm g}))\coloneqq- \frac{1}{\beta L^d}\,
 \log Z_L(\beta;{\bm J, h \bm g},\bsb(\beta,{\bm J, h \bm g})), 
\lb{phiL}
\end{equation}
whose expectation is defined by
\begin{equation}
f_L(\beta,h;\bsb(\cdot))\coloneqq \bbE\,\phi_L(\beta;{\bm J, h \bm g},\bsb(\beta,{\bm J, h \bm g}))
. 
\lb{fL}
\end{equation}
The function $f_L(\beta,h;\bsb(\cdot))$ is concave (or convex-upward) in $h>0$. 
Its infinite-volume limit
\eq
f(\beta,h)\coloneqq\lim_{L\up\infty}f_L(\beta,h;\bsb(\cdot)),
\lb{flim}
\en
exists and is independent of the boundary condition.
See Lemma~\ref{L:f} below.
\\

\begin{lemma}[Infinite-volume limits of the free energy densities]\label{L:f}
The limits \rlb{flim}
exists and is independent of the boundary condition $\bsb(\cdot)$.
\end{lemma}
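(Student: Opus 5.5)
The plan has two parts: establish the limit for the open boundary condition $\bsb=0$ by a subadditivity argument, and then show that any boundary condition $\bsb(\cdot)$, even one depending on $\beta,\bm J,h\bm g$, changes $f_L$ by at most $O(1/L)$.

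\emph{Boundary independence.} For fixed $\bss$ the Hamiltonian \rlb{RSHamil} with boundary $\bsb$ differs from the one with $\bsb=0$ by
\[
\sum_{(x,u)\in\dcalB_L}(J_{x,u}\sigma_x+hg_u)b_u ,
\]
and since $|b_u|\le1$ and $|\sigma_x|=1$ this is bounded in absolute value by $\sum_{(x,u)\in\dcalB_L}(|J_{x,u}|+h|g_u|)$. The elementary bound $|\log\sum e^{-\beta A}-\log\sum e^{-\beta B}|\le\beta\max_{\bss}|A-B|$ holds pointwise in $(\bm J,\bm g)$. So it holds even when $\bsb$ depends on $(\beta,\bm J,h\bm g)$, and this is the reason to do the comparison before taking expectations. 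Dividing by $\beta L^d$, taking $\bbE$ and using \rlb{EJ} together with $\bbE|g_u|<\infty$ gives
\[
|f_L(\beta,h;\bsb(\cdot))-f_L(\beta,h;0)|\le \frac{|\dcalB_L|}{L^d}\bigl(\bbE|J|+h\,\bbE|g|\bigr)\le \frac{2d}{L}\bigl(\bbE|J|+h\sqrt{2/\pi}\bigr)\to0 .
\]
Hence it is enough to prove that $\lim_L f_L(\beta,h;0)$ exists.

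\emph{Existence with open boundary.} Write $F_L:=\bbE\log Z_L(\beta;\bm J,h\bm g,0)$. Split $\Lambda_{kL}$ into $k^d$ translated copies of $\LaL$ and drop the couplings between copies, which number at most $d\,k^d L^{d-1}$. The same comparison bound gives
\[
F_{kL}\ge k^dF_L-\beta\,d\,k^dL^{d-1}\bbE|J| ,
\]
with the reverse inequality holding with a plus sign. Here translation invariance of the i.i.d.\ distribution is used, so that each copy has expectation $F_L$, and the open-boundary partition function factorizes over the copies. So $F_{kL}/(kL)^d$ and $F_L/L^d$ differ by at most $\beta d\,\bbE|J|/L$.

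For general $M$, write $M=kL+r$ with $0\le r<L$. Then $\Lambda_M$ consists of $k^d$ copies of $\LaL$ plus at most $d\,r\,M^{d-1}$ extra sites. Each extra site contributes at most $\beta(2d\,\bbE|J|+h\bbE|g|)+\log2$ to $|\bbE\log Z|$, and the cut bonds are handled the same way. This gives
\[
\Bigl|\frac{F_M}{M^d}-\frac{F_L}{L^d}\Bigr|\le C\Bigl(\frac1L+\frac LM\Bigr).
\]
So $F_L/L^d$ is Cauchy, and the limit exists. The bound $|F_L|/L^d\le\log2+\beta(d\,\bbE|J|+h\bbE|g|)$ shows the limit is finite.

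\emph{Main obstacle.} The only delicate step is the first one. Because $\bsb$ may depend on the disorder, one cannot use concavity or monotonicity arguments in $\bsb$. The pointwise Hamiltonian-difference bound avoids this, because it is uniform over all $\bsb\in\dcalC_L$. The boundary field term $h g_u b_u$ needs integrability of $g_u$, which is automatic for the Gaussian field, while the bonds $J_{x,u}$ rely only on \rlb{EJ}.
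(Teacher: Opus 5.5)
Your proof is correct and is the standard argument that the paper only cites (Simon, Bovier, Friedli--Velenik) without writing out. It combines a pointwise, $\bsb$-uniform bound on the boundary term, which handles disorder-dependent boundary conditions, with approximate additivity over cubes using $\bbE|J_{x,y}|<\infty$, and the one step you leave implicit (passing from $F_{kL}/M^d$ to $F_{kL}/(kL)^d$) follows from your final bound $|F_L|/L^d\le\log2+\beta(d\,\bbE|J|+h\,\bbE|g|)$.
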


\vspace{5mm}
The above lemma can be proven by employing the standard technique.
See, e.g., \cite{SimonBook,BovierBook,FV}. It is crucial that the function $f(\beta,h)$ is concave 
(or convex-upward) in $h>0$.
\\

%%%%%%%%%%%%%%%%%%
\subsection{Replicated system}
\label{s:rep}
In order to study order in spin glass models, we introduce $n$ replicated spin systems.
For our purpose, it suffices to consider the cases with $n=2$% and 3
, which are described by the Hamiltonians
\eq
H^{(2)}_L(\bss^{1},\bss^{2};{\bm J, h \bm g},\bsb)\coloneqq\sum_{\alpha=1}^2H_L(\bss^{\alpha};{\bm J, h \bm g},\bsb)
,
\lb{H2}
\en 
where $\bss^{\nu}=(\sigma_x^{\nu})_{x\in\LaL}\in\calC_L$ denotes a spin configuration in the $\nu$-th replica for $\nu=1,2$.
Define the thermal average as
\eqg
\sbkt{F}^{(2)}_{L,\beta;{\bm J, h \bm g},\bsb}
= \frac{1}{Z^{(2)}_L(\beta;{\bm J, h \bm g},\bsb)} \sum_{\bss^{1},\bss^{2}\in\calC_L} F(\bss^{1},\bss^{2})\,e^{ - \beta H^{(2)}_L(\bss^{1},\bss^{2};{\bm J, h \bm g},\bsb)},
\lb{F2}
\eng
with the partition functions
\eqg
Z^{(2)}_L(\beta;{\bm J, h \bm g},\bsb)=\sum_{\bss^{1},\bss^{2}\in\calC_L} 
e^{ - \beta H^{(2)}_L(\bss^{1},\bss^{2};{\bm J, h \bm g},\bsb)}=Z_L(\beta;{\bm J, h \bm g},\bsb) ^2,
\lb{ZL2}
\eng
We again define the averaged free energy for the replicated systems as
\eqg
f^{(2)}_L(\beta,h;\bsb(\cdot))\coloneqq- \frac{1}{\beta L^d}\,\bbE\, \log Z^{(2)}_L(\beta;{\bm J, h \bm g},\bsb(\beta,{\bm J, h \bm g})), 
\lb{fL2}
\eng
Note that the permutation symmetry gives 
\eqg
f^{(2)}_L(\beta,h;\bsb(\cdot))=2f_L(\beta,h;\bsb(\cdot)),
\lb{f2}
\eng
Define the infinite-volume limits of the free energy density also for the two replicated system
\eqg
f^{(2)}(\beta,h)=\lim_{L\up\infty}f^{(2)}_L(\beta,h;\bsb(\cdot))=2f(\beta,h).\lb{f2lim}
\eng

Define an replica overlap between two replicated spin configurations for $\alpha\ne\beta$ by
\eq
R_L^{\alpha,\beta}(\bss^{\alpha},\bss^{\beta})\coloneqq\frac{1}{L^d}\sum_{x\in\LaL}\sigma^\alpha_x\sigma^\beta_x.
\lb{Rab}
\en
The following formula 
for an arbitrary integrable 
function $F(g)$ of a standard Gaussian random variable $ g$ 
is well-known 
\begin{equation}
\mathbb E g F(g) =\int_{-\infty}^\infty \frac{dg}{\sqrt{2\pi}} e^{-\frac{g^2}{2}}gF(g)
=\int_{-\infty}^\infty \frac{dg}{\sqrt{2\pi}}e^{-\frac{g^2}{2}} \frac{d}{dg}F(g)
= \mathbb E F'(g),
\end{equation}
where a relation $\frac{\partial}{\partial g} e^{-\frac{g^2}{2}}=-ge^{-\frac{g^2}{2}}$ and 
integration by parts have been used.
Using this formula, the derivative of the free energy density is
represented in terms of the replica overlap, if the boundary configuration $\bsb(\beta, \bm J)$  is independent of $\bm g$
\begin{eqnarray}
&&\frac{\partial }{\partial h}f_L(\beta,h;\bsb(\cdot)) =-
\frac{1}{|\Lambda_L|}\sum_{x\in \Lambda_L}\mathbb E  g_x 
\langle \sigma_x \rangle_{L,\beta;{\bm J, h \bm g},\bsb(\beta,{\bm J})}-
\frac{1}{|\Lambda_L|}\sum_{u\in \partial \Lambda_L} \mathbb E g_u b_u(\beta,\bm J)
\nonumber \\&&=-\mathbb E 
\frac{1}{|\Lambda_L|}\sum_{x\in \Lambda_L} \frac{\partial}{\partial g_x} 
\langle \sigma_x \rangle_{L,\beta;{\bm J, h \bm g},\bsb(\beta,{\bm J})}
=\beta h \Big(\frac{1}{|\Lambda_L|}\sum_{x\in \Lambda_L}\mathbb E \langle \sigma_x \rangle_{L,\beta;{\bm J, h \bm g},\bsb(\beta,{\bm J})}^2-1\Big) 
\nonumber \\&&= \beta h(\mathbb E \langle R_L^{1,2}\rangle_{L,\beta;{\bm J, h \bm g},\bsb(\beta,{\bm J})}^{(2)}-1).
\lb{fR}
\end{eqnarray}

\begin{lemma}[Chatterjee]\label{Chatterjee} Let $A$ be a subset of the coupling constant space $[0,\infty)^2$, such that the partial derivative $\frac{\partial}{\partial h}f(\beta,h)$ exists at $(\beta,h) \in A$.
The convexity of $f(\beta,h)$ guarantees that its compliment $A^c$ is countable.
The expectation of the replica overlap in the infinite-volume limit is given by the partial
derivative of the free energy density at any $(\beta,h)\in A$
\begin{equation}
\lim_{L\uparrow\infty} \mathbb E \langle R_L^{1,2}\rangle_{L,\beta;{\bm J},h \bm g, \bsb(\beta,{\bm J})}^{(2)}=
\frac{1}{\beta h} \frac{\partial}{\partial h} f(\beta,h)+1,
\lb{Chat}
\end{equation}
if the boundary configuration $\bsb(\beta,{\bm J})$ is independent of the random field $h \bm g$.
\end{lemma}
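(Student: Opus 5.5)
The plan is the standard convexity argument: derivatives of a convergent sequence of concave functions converge at every point where the limit is differentiable. Throughout, fix $\beta>0$ and a boundary configuration $\bsb(\beta,\bm J)$ that does not depend on $\bm g$.

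\emph{Step 1: the finite-volume functions.} First I would check that $h\mapsto f_L(\beta,h;\bsb(\cdot))$ is concave and differentiable. For each realization, $\log Z_L$ is a log-sum-exp of functions that are affine in $h$. Hence $\log Z_L$ is convex and smooth in $h$, so $\phi_L=-\frac{1}{\beta L^d}\log Z_L$ is concave and smooth.

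Taking the expectation preserves concavity. Differentiation under $\bbE$ is justified because
\[
|\partial_h \phi_L|\le \frac{1}{L^d}\Bigl(\sum_{x\in\LaL}|g_x|+\sum_{u\in\dLaL}|g_u|\Bigr),
\]
which is integrable. Therefore \rlb{fR} holds, i.e. $\partial_h f_L=\beta h\bigl(\bbE\langle R^{1,2}_L\rangle^{(2)}-1\bigr)$.

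\emph{Step 2: the limit and its regularity.} By Lemma~\ref{L:f}, $f_L(\beta,h;\bsb(\cdot))\to f(\beta,h)$ pointwise in $h>0$. The limit does not depend on $\bsb$, and a pointwise limit of concave functions is concave. A concave function of one variable has one-sided derivatives everywhere, and they coincide except on a countable set. This gives the countability of $A^c$ on each line of fixed $\beta$, which is the sense needed here.

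\emph{Step 3: convergence of derivatives (the main step).} Fix $h$ with $(\beta,h)\in A$ and take $\epsilon>0$ small. Concavity of $f_L$ gives the chord bounds
\[
\frac{f_L(h+\epsilon)-f_L(h)}{\epsilon}\le \partial_h f_L(h)\le \frac{f_L(h)-f_L(h-\epsilon)}{\epsilon}.
\]
Letting $L\to\infty$ and using Step 2, the $\liminf$ and $\limsup$ of $\partial_h f_L(h)$ lie between the corresponding chord slopes of $f$. Letting $\epsilon\downarrow 0$, both chord slopes tend to $\partial_h f(\beta,h)$, because $f$ is differentiable at $h$. Hence $\partial_h f_L(h)\to\partial_h f(\beta,h)$. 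Dividing by $\beta h>0$ and using \rlb{fR} gives \rlb{Chat}.

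The only delicate point is the independence of $\bsb$ from $\bm g$. Without it, the Gaussian integration by parts in \rlb{fR} would produce an extra term $\bbE\, g_u b_u$ that does not vanish. With it, the boundary term $\bbE\, g_u b_u(\beta,\bm J)$ equals $\bbE g_u\cdot\bbE b_u=0$ by independence, so the identity is exact. Step 3 itself is routine once Lemma~\ref{L:f} is available.
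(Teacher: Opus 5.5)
Your proof is correct, and it takes a different route from the only one the paper points to. The paper gives no proof of this lemma: it attributes it to Chatterjee, whose argument uses self-averaging together with concavity, and merely remarks that self-averaging can be avoided.

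Your chord-slope sandwich is exactly such a self-averaging-free argument. It uses only the pointwise convergence $f_L\to f$ of the averaged free energy (Lemma~\ref{L:f}) and the concavity of $f_L$ in $h$. It yields $\partial_h f_L\to\partial_h f$ at every $h$ where $\partial_h f$ exists. What self-averaging of $\phi_L$ would add is $L^1$ convergence of the random derivative $\partial_h\phi_L$ itself. That matters for Ghirlanda--Guerra-type arguments, but not for the averaged identity \rlb{Chat}. Your route is also more direct than the subsequence-and-integration argument the paper uses for the analogous convergence step in the proof of Lemma~\ref{DFED}. That argument identifies the limit only for almost every parameter value and then needs a separate continuity argument.

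Two small remarks. First, the hypothesis that $\bsb$ does not depend on $h\bm g$ is used twice in your proof, not once. It gives $\bbE\, g_u b_u=0$ in \rlb{fR}, and it is also what makes $-\beta H_L$ affine in $h$ in Step~1, hence $f_L$ concave. Second, your slice-wise reading of the countability claim is the right one. Concavity in $h$ only makes each fixed-$\beta$ section of $A^c$ countable; $A^c\subset[0,\infty)^2$ itself need not be countable, and the lemma never needs it to be.
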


\vspace{5mm}
Although the above lemma is proven on the basis of the self-averaging property and the concavity
of the free energy density in \cite{Chatterjee2015}, the proof without the self-averaging property
is possible.

%%%%%%%%%%%%%%%%%%%%%%
\section{Order parameters}
\label{s:main}

\subsection{Broadening of the replica overlap}
Note that the EA model \rlb{RSHamil} under  random field
lacks the global $\bbZ_2$ symmetry. 
%In models without $\bbZ_2$ symmetry, the broadening of the distribution of the replica overlap $R_L^{1,2}$
% is regarded as RSB phenomenon.
The broadening of the replica overlap $R_L^{1,2}$ is observed by the standard deviation.
Define an order parameter for broadening as
\eq
\qbr\coloneqq\limsupL\sqrt{
\bbE\,\bbkt{(R_L^{1,2})^2}^{(2)}_{L,\beta;{\bm J, h \bm g},\bm b}
-\Bigl(\bbE\,\bbkt{R_L^{1,2}}^{(2)}_{L,\beta;{\bm J, h \bm g},\bm b}\Bigr)^2
}.
\lb{qbr2}
\en
Chatterjee has proven that $\qbr=0$ 
%defined by \rlb{qbr2}  vanishes 
in the random field Ising model in any dimensions, in any field strength and any temperature
 with open boundary condition $\bm b= \bm 0$
\cite{Chatterjee2015}. Then, Chatterjee has concluded that there is no  RSB phase in the random field Ising model.
This rigorous result has %invalidated
dismissed  several published claims %of self-averaging violation of the replica overlap and 
that RSB phase exists in the random field Ising model \cite{BD,MY}.

\subsection{The Edwards-Anderson order parameter}

A more common order parameter for spin glass is the Edwards-Anderson (EA) order parameter \cite{EA}.
Although the EA order parameter is utilized to detect 
a spontaneous breakdown of the $\bbZ_2$ symmetry, it can be useful to characterize the spin glass order.  
For any $L$, let
\eq
\qEA(L,\beta,h)\coloneqq\bbE\max_{\bsb\in\dcalC_L}\sbkt{R_L^{1,2}}^{(2)}_{L,\beta;{\bm J, h \bm g},\bsb}
=\frac{1}{L^d}\bbE\max_{\bsb\in\dcalC_L}\sum_{x\in\LaL}\bigl(\sbkt{\sigma_x}_{L,\beta;{\bm J, h \bm g},\bsb}\bigr)^2,
\lb{qL}
\en
where we choose a boundary configuration $\bsb$ that maximizes the replica overlap $\sbkt{R_L^{1,2}}^{(2)}_{L,\beta,0;{\bm J, h \bm g},\bsb}$ for each combination of $\beta$, $L$, and ${\bm J, h \bm g}$.
This means $\bsb$ generally depends on $L$, $\beta$, and ${\bm J, h \bm g}$.
Then, the EA order parameter is defined as the infinite-volume limit 
\eq
\qEA(\beta, h)\coloneqq\limL\qEA(L,\beta,h).
\lb{qEA1}
\en
This definition is identical to
the EA order parameter defined by van Enter and Griffits \cite{vEG}.
The existence of the limit is proven at the end of section~\ref{s:proofmain}.
As proven by Chatterjee \cite{Chatterjee2015}, 
the following lemma 
gives another proof of existence of $\qEA(\beta,h)$ with
a representation of the EA order parameter in terms of the derivative of the free energy density.\\

%%%%
%%%%%%%%%%%%%%%
\subsection{Self-averaging of the replica overlap% in the random field EA model
}\label{s:nonZ2}

%\vspace{3mm}
Here, we state our main theorem.\\

\begin{theorem}[Self-averaging of the replica overlap in the random field EA model
]
\label{t:Grgen}
In the random field EA model defined by the Hamiltonian \rlb{RSHamil}, the order parameter \rlb{qbr2}  vanishes at any $(\beta,h)\in A$ for any boundary condition
\eq
\qbr=0.
\lb{Grqjqbr}
\en
\end{theorem}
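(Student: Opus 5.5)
We combine two facts: an upper bound on $\bbE\langle (R_L^{1,2})^2\rangle^{(2)}$ by $\qEA(L,\beta,h)^2$ (up to vanishing errors), and the identification of both $\qEA(\beta,h)$ and $\lim_L\bbE\langle R_L^{1,2}\rangle^{(2)}$ with the same derivative of $f(\beta,h)$. These force the variance in \rlb{qbr2} to vanish.

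\textbf{Step 1: representation of $\qEA$.} Show that at $(\beta,h)\in A$,
\[
\qEA(\beta,h)=\frac{1}{\beta h}\frac{\partial}{\partial h}f(\beta,h)+1,
\]
so that $\qEA(\beta,h)=\lim_L\bbE\langle R_L^{1,2}\rangle^{(2)}_{L,\beta;\bm J,h\bm g,\bsb}$ for any $\bm g$-independent boundary condition, by Lemma~\ref{Chatterjee}.

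The difficulty is that the maximizing $\bsb$ in \rlb{qL} depends on $\bm g$, so \rlb{fR} cannot be applied directly. To handle this, fix the maximizer at a reference field strength and freeze its dependence on $\bm g$. That is, regard $\bsb^*=\bsb^*(\beta,\bm J,h_0\bm g)$ as a fixed boundary condition while differentiating in $h$ only through the bulk field. One could alternatively introduce an independent copy of the Gaussian field for this purpose.

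Because the free energy density is concave in $h$ for every fixed boundary condition, the one-sided difference quotients of $f_L(\beta,\cdot\,;\bsb^*)$ at $h_0$ bracket the Gaussian-integration-by-parts expression. Lemma~\ref{L:f} says the limit $f$ is the same for every boundary condition, including $\bsb^*$. Differentiability of $f$ at $h_0$ then pins the limit of $\bbE\langle R_L^{1,2}\rangle$ under $\bsb^*$, which is $\qEA(L,\beta,h_0)$ up to negligible boundary terms, to the derivative. The boundary field terms $hg_ub_u$ contribute only $O(|\dLaL|/L^d)\to0$.

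This step is the main obstacle. The integration by parts produces $\partial_{g_x}$ acting on $\bsb^*$, and the argument must avoid these terms. I would first try the standard convexity argument, comparing
\[
\frac{f_L(\beta,h_0+\epsilon;\bsb^*)-f_L(\beta,h_0;\bsb^*)}{\epsilon}
\]
with the derivative at the endpoints.

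\textbf{Step 2: Tasaki's inequality.} Show that
\[
\bbE\langle (R_L^{1,2})^2\rangle^{(2)}_{L,\beta;\bm J,h\bm g,\bsb}\le \qEA(L',\beta,h)^2+o(1)
\]
for suitable $L'$ comparable to $L$. Write
\[
\langle (R^{1,2}_L)^2\rangle=\frac{1}{L^{2d}}\sum_{x,y}\langle\sigma_x\sigma_y\rangle^2 .
\]
Tasaki's correlation inequality, as used in \cite{I1}, bounds the two-point function, roughly
\[
\langle\sigma_x\sigma_y\rangle^2\le \max_{\bsb}\langle\sigma_x\rangle^2_{\bsb}\,\max_{\bsb'}\langle\sigma_y\rangle^2_{\bsb'},
\]
after conditioning on spins outside well-separated blocks containing $x$ and $y$. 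Partition $\LaL$ into blocks of size $\ell$ with $1\ll\ell\ll L$. Pairs in the same or neighbouring blocks contribute $O(\ell^d/L^d)$. For the remaining pairs, conditioning on the complement turns each block into a system with some boundary condition, which is then bounded by the maximal one.

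Averaging over the disorder uses independence of $\bm J,\bm g$ across disjoint blocks. This gives
\[
\bbE\langle (R_L^{1,2})^2\rangle\le \qEA(\ell,\beta,h)^2+O(\ell^d/L^d)+O(1/\ell).
\]
This essentially reproduces the lemma of \cite{I1}, with the random field included in the block Hamiltonian.

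\textbf{Step 3: conclusion.} Let $L\to\infty$ and then $\ell\to\infty$. Step 2 gives
\[
\limsupL\bbE\langle (R_L^{1,2})^2\rangle\le\qEA(\beta,h)^2 .
\]
Steps 1 and Lemma~\ref{Chatterjee} give
\[
\limL\bigl(\bbE\langle R_L^{1,2}\rangle\bigr)^2=\qEA(\beta,h)^2 .
\]
Hence $\qbr^2\le0$, so $\qbr=0$ at every $(\beta,h)\in A$.

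For a $\bm g$-dependent boundary condition, the same sandwich applies. Every boundary condition satisfies $\bbE\langle R\rangle\le\qEA(L)$, while the lower bound comes from the concavity comparison in Step 1 together with boundary-independence of $f$.
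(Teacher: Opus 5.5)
\paragraph{Verdict: same architecture as the paper, but Step~1 has a genuine gap.}
Your plan is the paper's plan. The theorem follows from three ingredients: the representation $\qEA(\beta,h)=\frac{1}{\beta h}\partial_h f(\beta,h)+1$ (Lemma~\ref{DFED}), the Tasaki block bound (Lemma~\ref{L:main}), and Lemma~\ref{Chatterjee}. Your Steps~2 and~3 match the paper's.

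One correction to Step~2: apply the Jensen step to block sums, as in \rlb{sxsy2}--\rlb{sxsy3}, so that the maximum over boundary conditions is taken of $\sum_{x\in\Laa}\sbkt{\sigma_x}^2$. The pairwise bound you wrote ``roughly'' sums to $\sum_x\max_{\bsb}\sbkt{\sigma_x}_{\bsb}^2$, which is not $\ell^d\qEA(\ell,\beta,h)$.

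\paragraph{The gap in Step~1.}
Step~1 carries all the content, and only one direction of it is hard. The bound $\qEA(L,\beta,h)\ge\bbE\sbkt{R_L^{1,2}}^{(2)}_{L,\beta;\bm J,h\bm g,\bm 0}$ together with Lemma~\ref{Chatterjee} gives $\liminfL\qEA(L,\beta,h)\ge\frac{1}{\beta h}\partial_hf(\beta,h)+1$ for free. The theorem needs the reverse, $\liminf_{\ell}\qEA(\ell,\beta,h)\le\frac{1}{\beta h}\partial_hf(\beta,h)+1$, and your mechanism does not give it.

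With $\bsb^*=\bsb^*(\beta,\bm J,h_0\bm g)$ frozen, the map $h\mapsto\bbE\phi_L(\beta;\bm J,h\bm g,\bsb^*)$ is concave and tends to $f(\beta,h)$. So its derivative at $h_0$ does tend to $\partial_hf(\beta,h_0)$. But that derivative is $-L^{-d}\sum_x\bbE\,g_x\sbkt{\sigma_x}_{\bsb^*(\bm g)}$. Rewriting it as $\beta h_0(\qEA(L,\beta,h_0)-1)$ is Gaussian integration by parts in the very variables on which $\bsb^*$ depends.

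The resulting defect comes from jumps of $\sbkt{\sigma_x}$ when the maximizer switches as $g_x$ varies. For a single spin coupled with $J>0$ to one boundary spin, the maximizer is $\bsb^*=\mathrm{sign}(g)$ and the defect already equals $\sqrt{2/\pi}\tanh\beta J$. Since $\bsb^*$ is a global functional of all the $g_x$ and can select among bulk states, this defect is not one of the $O(|\dLaL|/L^d)$ boundary-field terms, and nothing in your argument bounds it. So the convexity comparison identifies the limit of the wrong quantity. The same objection applies to your final paragraph on $\bm g$-dependent boundary conditions.

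\paragraph{What the paper does, and where it still falls short.}
The device you mention only in passing is the one the paper uses. Write $h(t)\bm g=h_0\bm g^0+\sqrt t\,\bm g^1$ with $\bm g^0,\bm g^1$ independent and $h(t)=\sqrt{h_0^2+t}$, and use the boundary $\bsb_{\max}(\beta,\bm J,h_0\bm g^0)$. Define $e_L(t)=\bbE\phi_L(\beta;\bm J,h_0\bm g^0+\sqrt t\,\bm g^1,\bsb_{\max}(\beta,\bm J,h_0\bm g^0))$. Integration by parts in $\bm g^1$ is then legitimate and gives exactly $e_L'(t)=\frac{\beta}{2}(\bbE\sbkt{R_L^{1,2}}^{(2)}-1)$ in that state, so $e_L'(0)=\frac{\beta}{2}(\qEA(L,\beta,h_0)-1)$. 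Boundary independence gives $e_L(t)\to f(\beta,h(t))$.

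Even so, exchanging $L\to\infty$ with the derivative at $t=0$ is not a convexity fact. The function $e_L$ is concave in $s=\sqrt t$, not in $t$, and $e_L'(0)$ is half the second $s$-derivative at $s=0$. Pointwise convergence of concave functions does not control second derivatives.

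The paper handles this with a continuity argument through $d_L(t)$. That argument rests on $\lim_{t\to0}d_L(t)=h'(0)^{-1}e_L'(0)$. But $d_L(0)=-L^{-d}\sum_x\bbE\,g^0_x\sbkt{\sigma_x}_{\bsb_{\max}(\beta,\bm J,h_0\bm g^0)}$ is precisely your frozen-boundary derivative. So that identity is the same integration by parts against a $\bm g^0$-dependent boundary. Its $t<0$ branch also uses $\sqrt t$ with $t<0$.

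To complete Step~1 you need one of two things: a genuine estimate showing that the site-averaged integration-by-parts defect vanishes as $L\to\infty$, or another route to the upper bound on $\qEA$. Neither your proposal nor the paper's argument, as written, supplies one.
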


The theorem states that the replica overlap is self-averaging 
 in the random field EA model everywhere in the coupling constant space
in any dimension. The random field prohibits the broadening of the replica overlap. Then, there is no
phase transition observed by $\qbr$ in the random field EA model as in the random filed Ising model \cite{Chatterjee2015}. The self-averaging of the replica overlap  is
quite natural phenomenon in spin systems under random field, analogous to the magnetization concentration 
in spin systems under uniform field. It is well-known that the magnetization is concentrated always at 
its thermal averaged value in the ferromagnetic Ising model under uniform field.   
Since the SK model has RSB phase even under random field, the theorem provides a counterexample
to the empirical prediction that a finite-dimensional spin system above the lower critical dimension
has topologically equivalent phase diagram of the  corresponding mean-field model.  

\section{Proof}\label{s:proof}
\subsection{Proof of Theorem~ \protect\ref{t:Grgen}}

The following lemmas are essential for our proof.\\

\begin{lemma}[Free energy representation of the EA order parameter]\label{DFED}  
Consider the random field Edwards-Anderson model defined by the Hamiltonian \rlb{RSHamil}. 
The EA order parameter defined by \rlb{qL} and \rlb{qEA1} is
represented in terms of the derivative of the free energy density at all $(\beta,h)\in A$
\eq
\qEA(\beta,h)=\frac{1}{\beta h}\frac{\partial}{\partial h}f(\beta,h)+1.
\lb{qEAf}
\en
\end{lemma}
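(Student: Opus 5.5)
The plan is to prove \rlb{qEAf} by two matching inequalities. The whole difficulty is that the maximizing boundary configuration $\bsb^*=\bsb^*(L,\beta,{\bm J},h\bm g)$ in \rlb{qL} depends on $\bm g$. Because of this, the integration-by-parts identity \rlb{fR} cannot be applied to $\bsb^*$ directly, and Lemma~\ref{Chatterjee} only covers boundary conditions independent of the random field.

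\emph{Lower bound.} This direction is immediate. For every sample, the maximum over $\bsb$ dominates the value at the open boundary condition $\bsb=\bm 0$, which does not depend on $\bm g$. Hence
$$\qEA(L,\beta,h)\ \ge\ \bbE\sbkt{R_L^{1,2}}^{(2)}_{L,\beta;{\bm J},h\bm g,\bm 0}.$$
By Lemma~\ref{Chatterjee}, at $(\beta,h)\in A$ the right-hand side converges to $\frac1{\beta h}\partial_h f(\beta,h)+1$. Therefore $\liminf_L \qEA(L,\beta,h)$ is at least this value.

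\emph{Upper bound.} This is the main obstacle. I will decouple $\bsb^*$ from the variables that are integrated by parts, using an auxiliary independent Gaussian field. Let $\bm g'$ be an independent copy of $\bm g$, fix $0<t<h^2$, and write $h_t:=\sqrt{h^2-t}$. Consider the system with field $h_t\bm g+\sqrt{s}\,\bm g'$ for $s\in[0,t]$; in law this equals the field $\sqrt{h_t^2+s}\,\bm g$. Freeze the boundary at $\bsb^*(L,\beta,{\bm J},h_t\bm g)$, which depends only on $\bm g$. Set
$$\Psi_L(s):=\bbE\,\phi_L\bigl(\beta;{\bm J},h_t\bm g+\sqrt s\,\bm g',\bsb^*(L,\beta,{\bm J},h_t\bm g)\bigr).$$
Since $\bsb^*$ is independent of $\bm g'$, Gaussian integration by parts in $\bm g'$ alone, as in \rlb{fR}, gives
$$\Psi_L'(s)=-\tfrac{\beta}{2}\Bigl(1-\bbE\sbkt{R_L^{1,2}}^{(2)}_{L,\beta;{\bm J},h_t\bm g+\sqrt s\bm g',\bsb^*}\Bigr).$$
At $s=0$ this reads exactly $\Psi_L'(0)=-\frac{\beta}{2}\bigl(1-\qEA(L,\beta,h_t)\bigr)$.

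The boundary term in \rlb{RSHamil} involves only $O(L^{d-1})$ bonds and sites, so by \rlb{EJ}
$$\bbE\max_{\bsb}\bigl|\phi_L(\cdot,\bsb)-\phi_L(\cdot,\bm 0)\bigr|\le \frac{C}{L},$$
uniformly in the choice of boundary. Together with Lemma~\ref{L:f}, this gives $\Psi_L(s)\to f\bigl(\beta,\sqrt{h_t^2+s}\bigr)$ for every $s\in[0,t]$, even though $\bsb^*$ is random.

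The next step is a monotonicity or concavity statement for $\Psi_L$ in the variance parameter $s$. I would first try to show that $s\mapsto\bbE\sbkt{R_L^{1,2}}^{(2)}$ is nondecreasing along this interpolation, at fixed $\bsb$ independent of $\bm g'$. This is the standard fact that the second derivative of the Gaussian-field free energy in its variance is a sum of squared truncated correlations. Granted this, I would compare $\Psi_L'(0)$ with the chord of $\Psi_L$ on the small interval $[0,t]$, and exploit the maximality of $\bsb^*$ to bound the overlaps at $s>0$ by $\qEA(L,\beta,\cdot)$. The aim is to obtain
$$\qEA(L,\beta,h_t)\le 1+\frac{2}{\beta}\cdot\frac{\Psi_L(t)-\Psi_L(0)}{t}+o_L(1)+o_t(1).$$

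One then lets $L\to\infty$ and afterwards $t\to0$. The chord converges to $\frac{f(\beta,h)-f(\beta,h_t)}{t}\to\frac{1}{2h}\partial_hf(\beta,h)$ at points of $A$. This yields $\limsup_L\qEA(L,\beta,h)\le\frac{1}{\beta h}\partial_hf+1$, after noting that $\qEA(L,\beta,\cdot)$ varies continuously in $h$ up to errors controlled by the same concavity argument.

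I expect the sign bookkeeping in this comparison to be the genuinely delicate point. A naive use of concavity (``derivative at the left endpoint $\ge$ chord'') only reproduces the lower bound. The maximality of $\bsb^*$ must therefore enter at the correct endpoint: at $s=0$ the overlap under the frozen $\bsb^*$ is maximal by construction.

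If this fails, a fallback is to work with the concave (in $h$) function $\Phi_L(h):=\bbE\min_{\bsb}\phi_L(\beta;{\bm J},h\bm g,\bsb)$. It also converges to $f$ by the uniform boundary estimate, so its derivatives converge to $\partial_hf$ at points of $A$. One would then try to relate, via an envelope argument, the minimizing boundary of the free energy to the overlap-maximizing one. Combining the two bounds gives the existence of the limit \rlb{qEA1} and the representation \rlb{qEAf}.
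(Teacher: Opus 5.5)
Your lower bound is correct and cleaner than the paper's treatment of that half. The maximum over $\bsb$ dominates the $\bm g$-independent open boundary, so Lemma~\ref{Chatterjee} applies. Your $\Psi_L$ is exactly the paper's $e_L$, with $h_0\bm g^0+\sqrt t\,\bm g^1$ and $\bm b_{\max}(\beta,\bm J,h_0\bm g^0)$ renamed. Your uniform $O(1/L)$ boundary estimate correctly justifies $\Psi_L(s)\to f(\beta,\sqrt{h_t^2+s})$, which the paper only asserts.

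The gap is the upper bound, which is the real content of the lemma; the mechanism you sketch does not produce it. First, the ``standard fact'' is false. Write the subscript $s$ for the Gibbs state at field $h_t\bm g+\sqrt s\,\bm g'$ with boundary $\bsb^*$. Differentiating $\Psi_L'(s)=\frac{\beta}{2}(\bbE\sbkt{R_L^{1,2}}^{(2)}_s-1)$ once more gives $\Psi_L''(s)=\frac{\beta^3}{2L^d}\sum_{x,y}\bbE(C_{xy}^2-2m_xm_yC_{xy})$, where $m_x=\sbkt{\sigma_x}$ and $C_{xy}=\sbkt{\sigma_x\sigma_y}-m_xm_y$. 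This is the AT-type combination $\frac{\beta^3L^d}{2}\bbE\sbkt{R_{12}^2-4R_{12}R_{13}+3R_{12}R_{34}}$, which is not a sum of squares; for a single free spin the integrand is $(1-m^2)(1-3m^2)$. What is sign-definite is the second derivative in the \emph{amplitude}, $-\frac{\beta}{L^d}\sum_{x,y}g'_xg'_yC_{xy}\le 0$. That gives concavity in $h$, not monotonicity of the overlap in $s$.

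Second, your sign bookkeeping is inverted. A nondecreasing overlap would make $\Psi_L$ \emph{convex}, and then $\Psi_L'(0)\le(\Psi_L(t)-\Psi_L(0))/t$ is already the upper bound, with no role for $\bsb^*$. Conversely, maximality of $\bsb^*$ only gives $\bbE\sbkt{R_L^{1,2}}^{(2)}_s\le\qEA(L,\beta,\sqrt{h_t^2+s})$, and integrating in $s$ yields one more lower bound. So the missing ingredient is an $L$-uniform one-sided control of $\Psi_L'(0)$ by difference quotients of $\Psi_L$, and nothing in the proposal supplies it. Your asserted uniform-in-$L$ continuity of $\qEA(L,\beta,\cdot)$ is the same problem again. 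It is also avoidable if you fix the base point and move to $\sqrt{h_0^2+t}$, as the paper does. The fallback has the same defect: the slope of $\Phi_L$ is taken at a $\bm g$-dependent minimizing boundary, where integration by parts fails. Nothing links free-energy-minimizing boundaries to overlap-maximizing ones.

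The paper attacks exactly this point with a different device. After identifying $\lim_L e_L'(t)$ with $\partial_t f(\beta,h(t))$ for almost every $t$, it argues continuity at $t=0$ through the auxiliary slope $d_L(t)$. The map $s\mapsto\bbE\phi_L(\beta,\bm J,s\bm g,\bm b_{\max}(\beta,\bm J,h_0\bm g^0))$, with the boundary frozen, is concave in the amplitude $s$ and converges to $f(\beta,s)$. Its slope at $s=h(t)$ therefore tends to $\partial_h f(\beta,h(t))$, and concavity compares that slope with the one at $s=h_0$. One-sided bounds from $t>0$ and $t<0$ then sandwich $h'(0)^{-1}\lim_L e_L'(0)$; your missing upper bound is the $t<0$ half. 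Be aware that, as written, that step chooses $\delta$ at fixed $L$ before letting $L\uparrow\infty$. It therefore also needs $d_L(t)$ close to $h'(0)^{-1}e_L'(0)$ uniformly in $L$, and that uniform estimate is exactly what your argument would have to provide.
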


\begin{lemma}
\label{L:main}
For any boundary condition $\bsb(\cdot)$, the following bound is 
valid for any $(\beta,h)\in A$
\eq
\limsupL\bbE\,\bbkt{(R_L^{1,2})^2}^{(2)}_{L,\beta;\bm J,h \bm g,\bm b(\beta, \bm J,h \bm g)}
\leq  \{\qEA(\beta,h)\}^2.
\lb{main}
\en
\end{lemma}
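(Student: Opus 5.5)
The proof will adapt the Tasaki-inequality argument of \cite{I1} to the present setting. I expand the squared overlap as
\[
\bbE\,\bbkt{(R_L^{1,2})^2}^{(2)}_{L,\beta;\bm J,h\bm g,\bsb}=\frac{1}{L^{2d}}\sum_{x,y\in\LaL}\bbE\,\bigl(\sbkt{\sigma_x\sigma_y}_{L,\beta;\bm J,h\bm g,\bsb}\bigr)^2 ,
\]
so the task reduces to bounding the squared two-point function $\sbkt{\sigma_x\sigma_y}^2$ by products of squared one-point functions computed in suitably chosen boundary conditions.

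\textbf{Step 1 (one-point bounds).} For a fixed realization $(\bm J,\bm g)$ and a fixed $y$, I use the Markov property to condition on the spin at $y$. Conditioning on $\sigma_y=s$ produces Gibbs states on $\LaL\setminus\{y\}$ with a frozen spin. These can be embedded in a family of boundary spins on which Tasaki's inequality \cite{HT} acts. The target inequality is
\[
\sbkt{\sigma_x\sigma_y}^2\le \max_{\bsb'}\sbkt{\sigma_x}_{\bsb'}^2 ,
\]
possibly with an extra factor involving $y$. Tasaki's argument shows that a correlation $\sbkt{\sigma_x\sigma_y}$ is bounded by a one-point function in a system whose boundary spins are optimized. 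That optimized boundary condition is exactly what appears in the definition \rlb{qL} of $\qEA(L,\beta,h)$.

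\textbf{Step 2 (symmetrize and apply Cauchy--Schwarz).} Applying the bound symmetrically in $x$ and $y$, I aim to get
\[
\sbkt{\sigma_x\sigma_y}^2\le \max_{\bsb'}\sbkt{\sigma_x}^2_{\bsb'}\;\max_{\bsb''}\sbkt{\sigma_y}^2_{\bsb''}
\]
up to boundary corrections from sites near $\dLaL$, which are negligible after dividing by $L^{2d}$. Summing over $x,y$ and taking $\bbE$ then gives $\bbE\,\bigl[(\max_{\bsb}\sbkt{R}_{\bsb})^2\bigr]$. This is not yet $(\qEA)^2$: a concentration step is still needed.

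\textbf{Step 3 (concentration and identification of the limit).} I use Lemma~\ref{DFED}. Since $\max_{\bsb}\sbkt{R}_{\bsb}\in[0,1]$, it suffices to show that this random variable concentrates around its mean $\qEA(L,\beta,h)\to\qEA(\beta,h)$. I would deduce this from the self-averaging of $\phi_L$ with $\bsb$ optimized, via Gaussian concentration in $\bm g$ (with $\bm J$ handled using \rlb{EJ}). The self-averaging of the free energy then transfers to its $h$-derivative at points $(\beta,h)\in A$, by the standard convexity argument (Griffiths' lemma for sequences of concave functions). Combining this with the identification of the derivative in Lemma~\ref{DFED} gives $\limsupL \bbE[(\max_{\bsb}\sbkt{R}_{\bsb})^2]\le\qEA(\beta,h)^2$, which is \rlb{main}.

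\textbf{Main obstacle.} I expect Step 1 to be the hardest part. The precise form of Tasaki's inequality must produce a \emph{product} of two independently maximized one-point functions, rather than a single one-point function. If this product form cannot be obtained, the fallback is to bound $\sbkt{\sigma_x\sigma_y}^2\le\max_{\bsb}\sbkt{\sigma_x}^2_{\bsb}$ only for $x$. This route yields just $\qEA$ rather than $\qEA^2$, so to recover the square one would need to apply the inequality to the two-replica system with the overlap $\sigma^1_x\sigma^2_x$ playing the role of the spin.
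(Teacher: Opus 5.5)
Your proposal has a genuine gap, and it is not confined to Step~1.

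\emph{Step~1.} Conditioning on the single spin $\sigma_y$ only gives $\sbkt{\sigma_x\sigma_y}=\sum_{s=\pm1}P(\sigma_y=s)\,s\,\sbkt{\sigma_x}_{\sigma_y=s}$. This contains no factor $\sbkt{\sigma_y}$. A product structure requires conditioning on a set of spins that \emph{separates} $x$ from $y$.

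\emph{Step~2.} Even granting your site-wise bound, summing over $x,y$ gives $\bigl(L^{-d}\sum_x\max_{\bsb'}\sbkt{\sigma_x}^2_{\bsb'}\bigr)^2$. This is a sum of site-by-site maxima, which is $\ge(\max_{\bsb}\sbkt{R_L^{1,2}}^{(2)}_{\bsb})^2$, not equal to it. It has no known relation to $\qEA$, whose definition uses \emph{one} boundary condition for all sites simultaneously.

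\emph{Step~3.} This cannot be carried out as described. For a fixed sample, $\partial_h\phi_L=-L^{-d}\sum_x g_x\sbkt{\sigma_x}$ plus a boundary term. It equals $\beta h(\sbkt{R_L^{1,2}}^{(2)}-1)$ only after taking $\bbE$ and integrating by parts in $\bm g$. So self-averaging of $\phi_L$ plus Griffiths' lemma would at best concentrate $L^{-d}\sum_x g_x\sbkt{\sigma_x}$, not the quenched overlap. Moreover, with $\bsb$ optimized as a function of $h\bm g$, $\phi_L$ is no longer concave in $h$ sample by sample. Sample-to-sample fluctuations of $\sbkt{R_L^{1,2}}^{(2)}$ are exactly the hard object; Chatterjee needed FKG and the Ghirlanda--Guerra identities for them. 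Nothing in the paper supplies such a concentration result.

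The idea you are missing is a mesoscopic block decomposition, which makes concentration unnecessary.
\begin{itemize}
\item Place $K=\lfloor (L-1)/(\ell+1)\rfloor^d$ translated copies $\Lal^\kappa$ of $\Lal$ inside $\LaL$, at mutual distance at least $2$.
\item For $x\in\Laa$ and $y\in\Lab$ in two distinct copies, condition on all spins $\bstau$ outside $\Laa\cup\Lab$. The conditional measure factorizes, so $\sbkt{\sigma_x\sigma_y}=\sum_{\bstau}P(\bstau)\sbkt{\sigma_x}_{\mathrm{a},\bstau}\sbkt{\sigma_y}_{\mathrm{b},\bstau}$.
\item Square, sum over $x\in\Laa$ and $y\in\Lab$ \emph{before} maximizing, and use $2uv\le u^2+v^2$ in the double sum over $\bstau,\bstau'$. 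This yields $\max_{\bsb'}\sum_{x\in\Laa}\sbkt{\sigma_x}^2_{\mathrm{a},\bsb'}\cdot\max_{\bsb''}\sum_{y\in\Lab}\sbkt{\sigma_y}^2_{\mathrm{b},\bsb''}$, with a single boundary condition per block. This is legitimate since $\tau_u\in\{\pm1\}\subset[-1,1]$.
\item The two factors depend on disjoint families of $J$'s and $g$'s, so their expectation factorizes into $\ell^{2d}\{\qEA(\ell,\beta,h)\}^2$. It is this independence, not concentration, that produces the square.
\item Pairs not lying in two distinct blocks make up a fraction tending to $1-(\ell/(\ell+1))^{2d}$, and are bounded by $1$. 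So they are not just ``boundary corrections'' near $\dLaL$; they vanish only as $\ell\to\infty$.
\item Take $L\to\infty$ and then $\ell\to\infty$, using the existence of $\lim_\ell\qEA(\ell,\beta,h)$ from Lemma~\ref{DFED}. This gives \rlb{main}.
\end{itemize}
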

%The proofs of the lemmas are given in the following.

\medskip
\noindent
{\em Proof of Theorem~\ref{t:Grgen} given Lemma \ref{DFED} and Lemma~\ref{L:main}}\/:
Let $(L_i)_{i=1,2,\ldots}$ be a subsequence that attains the $\limsup$ in \rlb{main}.
Lemma \ref{DFED} implies that the $\limsup$ also in \rlb{qbr2} is obtained by the subsequence 
$(L_i)_{i=1,2,\ldots}$.
Rewrite $\qbr$ defined by \rlb{qbr2} using \rlb{Chat} and \rlb{main}
\eq
(\qbr)^2=\lim_{i\up\infty}\Big[
\bbE\,\bbkt{(R_{L_i}^{1,2})^2}^{(2)}_{L_i,\beta;{\bm J, h \bm g},\bm b}
-\big(\bbE\,\bbkt{R_{L_i}^{1,2}}^{(2)}_{L_i,\beta;{\bm J, h \bm g},\bm b}\big)^2\Big]\leq \{\qEA(\beta,h)\}^2-\Big[\frac{1}{\beta h}\frac{\partial}{\partial h}f(\beta,h)+1\Big]^2.
\lb{qbr3}
\en
The identity \rlb{qEAf}
implies that the right-hand side of \rlb{qbr3} vanishes.
Then, we obtain  \rlb{Grqjqbr}.~\qed\\

Lemma \ref{DFED} implies that 
the EA order parameter converges to the derivative of free energy density 
in the infinite-volume limit in the random field EA model, regardless of the boundary conditions. 
The proof of Lemma~\protect\ref{L:main}  essentially relies on the short-range nature of the model. This means our proof of Theorem ~\ref{t:Grgen} cannot apply to long-range models. 

\subsection{Proof of Lemma \ref{DFED}}
 
The proof relies on the property of the free energy density that the  
derivative $\frac{\partial}{\partial h}f_L(\beta,h;\bsb(\cdot))$ converges to $\frac{\partial}{\partial h}f(\beta,h)$ as $L\up\infty$ for any boundary condition $\bsb(\cdot)$ at any $(\beta,h) \in A$. Let $(\beta,h_0) \in A$ be arbitrarily fixed coupling constants. 
Represent the random field $(g_x)_{x \in \Lambda_L}$  
in terms of i.i.d. standard Gaussian random variables $g^0_x, g^1_x$ with a parameter $t\in \mathbb R$
\begin{equation}
g_x= \frac{h_0 g_x^0 + \sqrt{t} g_x^1}{h(t)},
\end{equation}
where $h(t):=\sqrt{h_0^2+t}$.
Each $g_x$ is independent of another standard Gaussian random variable 
\begin{equation}
g_x^-:= \frac{-\sqrt{t} g_x^0 +h_0 g_x^1}{h(t)}.
\end{equation} 
Inverse transformation is given by
\begin{equation}
g_x^0= \frac{h_0 g_x - \sqrt{t} g_x^-}{h(t)}, \ \ \ g_x^1:= \frac{\sqrt{t} g_x +h_0 g_x^-}{h(t)}.
\end{equation}
To consider the EA order parameter defined by \rlb{qL}, 
let $ \bm b_{\max}(\beta, \bm J,h_0 \bm g^0 )$ be a boundary configuration  
to maximize $\langle R_L^{1,2} \rangle_{L,\beta,\bm J, h_0 \bm g^0, \bm b(\beta,\bm J, h_0 \bm g^0)}$.
Define an another free energy density by
\begin{eqnarray}
e_L(t)&:=&
 \mathbb E \phi_L(\beta, \bm J,h_0 \bm g^0+\sqrt{t} \bm g^1, \bm b_{\max}
 (\beta, \bm J,h_0 \bm g^0 ))\nonumber \\
 &=& \mathbb E \phi_L(\beta, \bm J,h(t) \bm g, \bm b_{\max}
 (\beta, \bm J,h_0 (h_0 \bm g-\sqrt{t} \bm g^-)/h(t) )),
\lb{2random-inf-v}
\end{eqnarray} 
which converges to $f(\beta,h(t))$ in the infinite-volume limit.
Since the boundary configuration $\bm b_{\max}
(\beta, \bm J,h_0 \bm g^0 )$ is independent of random variables 
$\bm g^1$
 in the above free energy density, integration by parts to obtain the formula \rlb{fR} for the expectation of the replica overlap can be used also. The derivative of the above free energy density with respect to $t$
gives 
\begin{eqnarray}
%\frac{\partial}{\partial t}f_L(\beta, h(t), \bm b_{\max}(\cdot))&=&
 e_L'(t) %\nonumber \\&&
&=&-\frac{1}{2\sqrt{t}L^d} \sum_{x\in \Lambda_L} \mathbb E  g_x^1 \langle \sigma_x \rangle_{L,\beta,\bm J,h_0\bm g^0+\sqrt{t}\bm g^1,\bm b_{\max}
(\beta,\bm J,h_0\bm g^0)}\nonumber \\
&=&\frac{\beta}{2} \big( \mathbb E \langle R_L^{1,2}
\rangle_{L,\beta,\bm J,h_0\bm g^0+\sqrt{t}\bm g^1,\bm b_{\max}
(\beta, \bm J,h_0\bm g^0)}^{(2)}-1\big)
\nonumber \\&=&\frac{\beta}{2} \big( \mathbb E \langle R_L^{1,2}
\rangle_{L,\beta,\bm J,h(t)\bm g,\bm b_{\max}
(\beta, \bm J,h_0(h_0 \bm g - \sqrt{t} \bm g^-)/h(t))}^{(2)}-1\big)
%\leq 0
. \lb{2random}
\end{eqnarray}
Then, $\qEA(L,\beta,h_0)$ is identical to $e_L'(0)$
\begin{equation}
\qEA(L,\beta,h_0)=e_L'(0).
\end{equation}
Since $e'_L(t) \in [-\beta/2,0]$ is bounded uniformly, 
there is a subsequence $L_i$ such that 
\begin{equation}
e'(t):=\liminf_{L\uparrow\infty}e_L'(t)= \lim_{i\uparrow\infty}e_{L_i}'(t).
\end{equation}
For any $t_1,t_2\in \mathbb R$, we have
\begin{equation}
\int_{t_1}^{t_2} dt e'(t) =
\lim_{i\uparrow\infty} \int_{t_1}^{t_2} dt e_{L_i}'(t) =\lim_{i\uparrow\infty} [e_{L_i}(t_2)-e_{L_i}(t_1)]
= f(\beta,h(t_2))-f(\beta,h(t_1))=\int_{t_1}^{t_2} dt \frac{\partial}{\partial t}f(\beta,h(t)).
\end{equation}
For almost all $t\in \mathbb R$, $e'(t)$ can be represented in the derivative of the
free energy density
\begin{equation}
e'(t) = \frac{\partial}{\partial t}f(\beta,h(t)).
\end{equation}
The same argument for $\limsup$ as the $\liminf$  implies  
\begin{equation}
\limsup_{L\uparrow \infty}
e_L'(t) =\frac{\partial}{\partial t}f(\beta, h(t)),
\end{equation}
for almost all $t \in \mathbb R$.
Then,
the derivative of the free energy density exists
in the infinite-volume limit
%\begin{equation}
%\lim_{L\uparrow \infty}
%e_L'(t) =\frac{\partial}{\partial t}f(\beta, h(t)).
%\end{equation}
\begin{eqnarray}
h'(t)\frac{\partial}{\partial h}f(\beta, h(t))&=&\lim_{L\uparrow\infty} e_L'(t)
%\frac{\partial}{\partial t} \mathbb E \phi_L(\beta, \bm J,h_0 \bm g^0+\sqrt{t} \bm g^1, \bm b_{\max}(\beta, \bm J,h_0 \bm g^0 ))
\nonumber \\
&=&\lim_{L\uparrow\infty} \frac{\beta}{2}(\mathbb E 
\langle R_L^{1,2}\rangle_{L,\beta,\bm J,h_0\bm g^0+\sqrt{t}\bm g^1,\bm b_{\max}
(\beta, \bm J,h_0 \bm g^0)}^{(2)}-1)
\nonumber \\&=&\frac{\beta}{2} \lim_{L\uparrow\infty} \big( \mathbb E \langle R_L^{1,2}
\rangle_{L,\beta,\bm J,h(t)\bm g,\bm b_{\max}
(\beta, \bm J,h_0(h_0 \bm g - \sqrt{t} \bm g^-)/h(t))}^{(2)}-1\big)
\nonumber \\&=&\frac{\beta}{2} \lim_{L\uparrow\infty} \big( \mathbb E \langle R_L^{1,2}
\rangle_{L,\beta,\bm J,h(t)\bm g,\bm b}^{(2)}-1\big).
\end{eqnarray}
To show the continuity of $e'(t)$ at $t=0$, define a function $d_L(t)$ for $t\in \mathbb R$ by
\begin{equation}
d_L(t) := \frac{\partial}{\partial h_0}
\mathbb E \phi_L(\beta, \bm J,h_0\bm g, \bm b_{\max}
(\beta, \bm J,h_0  (h_0 
\bm g - \sqrt{t} \bm g^-)/h(t)
)). 
\end{equation}
Note that for any finite $L$,
\begin{equation}
\lim_{t\to0}d_L(t) = h'(0)^{-1}e_L'(0).
\end{equation}
For any $\epsilon >0$, there exists $\delta>0$, such that 
\begin{equation}
d_L(t) -\epsilon <  h'(0)^{-1}e_L'(0) < d_L(t) +\epsilon,
\end{equation}
for any $ t \in (-\delta,\delta)$.
The concavity  in $h$  
leads to the following inequality for $t \in [0,\delta)$
\begin{eqnarray}
\frac{\partial}{\partial h}f_L(\beta,h(t),\bm b_{\max}( \cdot)
)
&=&\Big[\frac{\partial}{\partial s}
\mathbb E \phi_L(\beta, \bm J,s \bm g, \bm b_{\max}
(\beta, \bm J,h_0  (h_0 
\bm g - \sqrt{t} \bm g^-)/h(t)
)
) \Big]_{s=h(t)}\nonumber \\
&=&
\Big[\frac{\partial}{\partial s}
\mathbb E \phi_L(\beta, \bm J,s \bm g, \bm b_{\max}
(\beta, \bm J,h_0  (h_0 
\bm g - \sqrt{t} \bm g^-)/h(t)
)
) \Big]_{s=h(t)}\nonumber \\
&\leq&
\Big[\frac{\partial}{\partial s}
\mathbb E \phi_L(\beta, \bm J,s\bm g, \bm b_{\max}
(\beta, \bm J,h_0  (h_0 
\bm g - \sqrt{t} \bm g^-)/h(t)
)) \Big]_{s=h_0}\nonumber \\
&=&  d_L(t) < h'(0)^{-1}
e_L'(0) +\epsilon.
\end{eqnarray}
In limits $L\uparrow \infty$ for any $t \in [0,\delta)$ 
with $(\beta, h(t)) \in A$, the above inequality becomes
\begin{equation}
\frac{\partial}{\partial h} f(\beta, h(t)) \leq 
\lim_{L\uparrow \infty} d_L(t)< h'(0)^{-1}\lim_{L\uparrow \infty} e_L'(0)+\epsilon.
\end{equation}
The same argument for $t\in (-\delta,0)$ with $(\beta, h(t)) \in A$ gives 
\begin{equation}
\frac{\partial}{\partial h} f(\beta, h(t))\geq \lim_{L\uparrow \infty} d_L(t)
> h'(0)^{-1}\lim_{L\uparrow \infty} e_L'(0)-\epsilon.
\end{equation}
These lead to the following inequality for any $t \in (-\delta,\delta)$ with $(\beta,h(t))\in A$
\begin{equation}
\Big|h'(0)^{-1}\lim_{L\uparrow \infty} e_L'(0)-
\frac{\partial}{\partial h}f(\beta, h(t))\Big|< \epsilon,
\end{equation}
which implies the continuity at $t=0$.
For the EA order parameter $\qEA(\beta,h_0)$
defined by \rlb{qL} and \rlb{qEA1}, 
 the above  at $t=0$ becomes
\begin{equation}
\frac{1}{\beta h_0}\frac{\partial}{\partial h_0}f(\beta, h_0)=
\qEA(\beta,h_0)-1.
\end{equation} 
Since the choice $(\beta, h_0)\in A$ is arbitrary, this
 implies \rlb{qEAf}.  
$\Box$

\subsection{Proof of Lemma~\protect\ref{L:main}}
\label{s:proofmain}
The inequality \rlb{main} is the same one given by Lemma 4.1 in \cite{I1}, 
although the upper bound on $\limsupL\bbE\,\bbkt{(R_L^{1,2})^2}^{(2)}_{L,\beta;J,h g,\bsb(\beta,{\bm J, h \bm g})}$ in the latter is represented in terms of a RSB perturbed free energy, 
which is not used in the preset paper. 
The proof of the inequality \rlb{main} is done by the same method based on
a simple correlation inequality devised in \cite{HT}.

Fix $L$, take $\ell$ that is much smaller than $L$, and let $\Lal:=[1,\ell]^d \cap \mathbb Z ^d$ 
be the $d$-dimensional  hypercubic lattice 
 as in \rlb{LaL}.
Let $\Lal^\kappa ~(\subset\LaL)$ with $\kappa=1,\ldots,K$ be the translated copies of $\Lal$, such that
 $|x-y|\ge2$ for all $x\in\Lal^\kappa$ and $y\in\Lal^{\kappa'}$ for any $\kappa\ne\kappa'$.
For notational convenience, assume that copies of $\Lal$ do not touch the boundary of $\LaL$, i.e., $|x-u|\ge2$ for any $x\in\cup_{\kappa=1}^K\Lal^\kappa$ and $u\in\dLaL$.\\
See Figure~\ref{f:L9}.
\vspace{-.3cm}
\begin{figure}[H]
\begin{center}
{\includegraphics[width=6truecm,angle=-90]{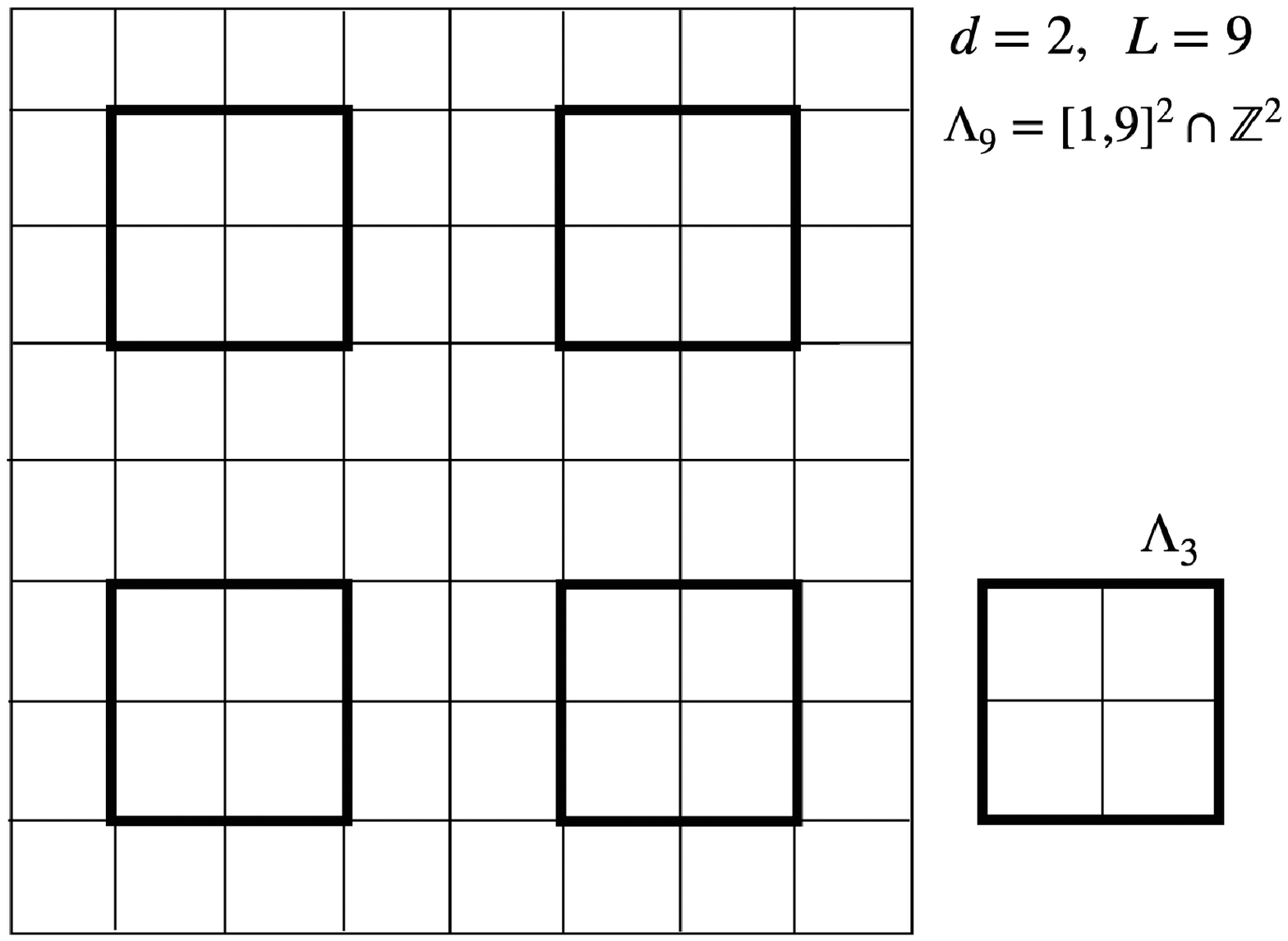}}
%\vspace{-.9cm}
\caption[dummy]{
Four translated copies of $\La_3$ are embedded into $\La_9$.
This corresponds to the optimal choice \rlb{Kopt}.
}
\label{f:L9}
\end{center}
\end{figure}
\vspace{-.5cm}

Fix any $\kappa\ne\kappa'$ and abbreviate $\Lal^\kappa$ and $\Lal^{\kappa'}$ as $\Laa$ and $\Lab$, respectively.
Decompose any spin configuration $\bss=(\sigma_x)_{x\in\LaL}\in\calC_L$ into $\bss=(\bssa,\bssb,\bstau)$, where $\bssa=(\sigma_x)_{x\in\Laa}$, $\bssb=(\sigma_x)_{x\in\Lab}$, and $\bstau=(\tau_x)_{x\in\LaL\backslash(\Laa\cup\Lab)}$. Here, we denote $\tau_x=\sigma_x$ at $x\in\LaL\backslash(\Laa\cup\Lab)$ for later convenience.

Let us tentatively fix a boundary condition $\bsb$ and a random realization of $J_{x,y}$ and $hg_x$, and decompose the Hamiltonian \rlb{RSHamil} as
\eq
H_L(\bss;\bsb)=H_\mathrm{a}(\bssa,\bstau)+H_\mathrm{b}(\bssb,\bstau)+\tilde{H}(\bstau,\bsb),
\lb{HHH}
\en
where, for $\alpha=\mathrm{a}, \mathrm{b}$, we set
\eq
H_\alpha(\bss_\alpha,\bstau)=- \sumtwo{\{x,y\}\in\calB_L}{{\rm s.t.}\,x,y\in\La_\alpha} J_{x,y}\,\sigma_x \sigma_y
-\sumtwo{\{x,y\}\in\calB_L}{{\rm s.t.}\,x\in\La_\alpha,\,y\in\LaL\backslash\La_\alpha}J_{x,y}\,\sigma_x\tau_y
- \sum_{x \in \La_\alpha}hg_x\,\sigma_x.
\en
It is crucial that $\tilde{H}$ does not depend on $\bssa$ or $\bssb$.
We have tentatively dropped the ${\bm J, h \bm g}$ dependence for notational simplicity.

Take any $x\in\Laa$ and $y\in\Lab$.
Consider the thermal average of $\sigma_x\sigma_y$ as defined in \rlb{<F>}, and rewrite it as
\eqa
\sbkt{\sigma_x\sigma_y}_{L,\beta;\bsb}&=\frac{1}{Z_{L,\beta;\bsb}}\sum_{\bss\in\calC_L}\sigma_x\sigma_y\,e^{-\beta H_L(\bss;\bsb)}
\nl
&=\frac{1}{Z_{L,\beta;\bsb}}\sum_{\bstau}e^{-\beta\tilde{H}(\bstau;\bsb)}
\sum_{\bssa}\sigma_x\,e^{-\beta H_\mathrm{a}(\bssa,\bstau)}
\sum_{\bssb}\sigma_y\,e^{-\beta H_\mathrm{b}(\bssb,\bstau)}
\nl
&=\frac{1}{Z_{L,\beta;\bsb}}\sum_{\bstau}e^{-\beta\tilde{H}(\bstau;\bsb)}\,
Z_{\mathrm{a},\bstau}\,\sbkt{\sigma_x}_{\mathrm{a},\bstau}\,
Z_{\mathrm{b},\bstau}\,\sbkt{\sigma_y}_{\mathrm{b},\bstau}
\lb{sxsy1}
\ena
where the thermal average of sub systems for $\alpha=\mathrm{a}, \mathrm{b}$ is defined by
\eq
\sbkt{\cdots}_{\alpha,\bstau}=\frac{1}{Z_{\alpha,\bstau}}\sum_{\bss_\alpha}(\cdots)\,e^{-\beta H_\alpha(\bss_\alpha,\bstau)},\quad
Z_{\alpha,\bstau}=\sum_{\bss_\alpha}e^{-\beta H_\alpha(\bss_\alpha,\bstau)}.
\en
Define a distribution of spin configuration $\bm \tau$ by
\eq
P(\bstau)=\frac{e^{-\beta\tilde{H}(\bstau;\bsb)}\,Z_{\mathrm{a},\bstau}\,Z_{\mathrm{b},\bstau}}{Z_{L,\beta;\bsb}},
\en
which satisfies $P(\bstau)\ge0$ and $\sum_{\bstau}P(\bstau)=1$. Then \rlb{sxsy1} can be written as
\eq
\sbkt{\sigma_x\sigma_y}_{L,\beta;\bsb}=\sum_{\bstau}P(\bstau)\,\sbkt{\sigma_x}_{\mathrm{a},\bstau}\,\sbkt{\sigma_y}_{\mathrm{b},\bstau}.
\en
 Let us evaluate the summation of the squared two point function over all sites in $\Lambda_\mathrm{a}$ and
  $\Lambda_\mathrm{b}$, using the above representation
\eqa
\sumtwo{x\in\Laa}{y\in\Lab}\bigl(\sbkt{\sigma_x\sigma_y}_{L,\beta;\bsb}\bigr)^2&=\sumtwo{x\in\Laa}{y\in\Lab}\sum_{\bstau,\bstau'}P(\bstau)\,P(\bstau')\,\sbkt{\sigma_x}_{\mathrm{a},\bstau}\,\sbkt{\sigma_x}_{\mathrm{a},\bstau'}\,\sbkt{\sigma_y}_{\mathrm{b},\bstau}\,\sbkt{\sigma_y}_{\mathrm{b},\bstau'}
\nl&\le\sumtwo{x\in\Laa}{y\in\Lab}\sum_{\bstau}P(\bstau)\,\bigl(\sbkt{\sigma_x}_{\mathrm{a},\bstau}\bigr)^2\,\bigl(\sbkt{\sigma_y}_{\mathrm{b},\bstau}\bigr)^2
\nl&=\sum_{\bstau}P(\bstau)\Bigl\{\sum_{x\in\Laa}\bigl(\sbkt{\sigma_x}_{\mathrm{a},\bstau}\bigr)^2\Bigr\}
\Bigl\{\sum_{y\in\Lab}\bigl(\sbkt{\sigma_y}_{\mathrm{b},\bstau}\bigr)^2\Bigr\},
\lb{sxsy2}
\ena
where we have used the trivial inequality
\eq
\sbkt{\sigma_x}_{\mathrm{a},\bstau}\,\sbkt{\sigma_x}_{\mathrm{a},\bstau'}\,\sbkt{\sigma_y}_{\mathrm{b},\bstau}\,\sbkt{\sigma_y}_{\mathrm{b},\bstau'}
\le\frac{1}{2}\Bigl\{\bigl(\sbkt{\sigma_x}_{\mathrm{a},\bstau}\bigr)^2\,\bigl(\sbkt{\sigma_y}_{\mathrm{b},\bstau}\bigr)^2+\bigl(\sbkt{\sigma_x}_{\mathrm{a},\bstau'}\bigr)^2\,\bigl(\sbkt{\sigma_y}_{\mathrm{b},\bstau'}\bigr)^2\Bigr\},
\en
to get the second line.

Since a part of the configuration $\bstau$ plays the role of boundary condition for the expectation value $\sbkt{\cdots}_{\alpha,\bstau}$,
we obtain Tasaki's inequality \cite{HT}, which  gives an upper-bounded on the final line in \rlb{sxsy2}  
\eq
\sumtwo{x\in\Laa}{y\in\Lab}\bigl(\sbkt{\sigma_x\sigma_y}_{L,\beta;{\bm J, h \bm g},\bsb}\bigr)^2
\le\max_{\bsb'}\sum_{x\in\Laa}\bigl(\sbkt{\sigma_x}_{\mathrm{a};{\bm J, h \bm g},\bsb'}\bigr)^2\,\max_{\bsb''}\sum_{x\in\Lab}\bigl(\sbkt{\sigma_x}_{\mathrm{b};{\bm J, h \bm g},\bsb''}\bigr)^2.
\lb{sxsy3}
\en
Here $\sbkt{\cdots}_{\mathrm{a};{\bm J, h \bm g},\bsb'}$ denotes the expectation value, exactly as in \rlb{<F>}, on the lattice $\Laa$ with boundary configuration $\bsb'$ and random interactions and fields determined by 
${\bm J, h \bm g}$.

%We now allow the boundary configuration $\bsb$ on $\partial \LaL$ to depend on ${\bm J, h \bm g}$.
If $ \Lambda_\ell^\kappa
\cap\Lambda_\ell^{\kappa'}=\phi$
for any $\kappa,\kappa'=1,\ldots,K$, then
 the sample expectation of \rlb{sxsy3} over $\bm J, \bm g$ enables us to
 get our main inequality
 \eq
\bbE\sum_{x\in\Lal^\kappa,\,y\in\Lal^{\kappa'}}\bigl(\sbkt{\sigma_x\sigma_y}_{L,\beta;{\bm J, h \bm g},\bsb(\beta,{\bm J, h \bm g})}\bigr)^2
\le\ell^{2d}\,\bigl\{\qEA(\ell,\beta, h)\bigr\}^2
\lb{sxsy4}
\en
 where $\qEA(\ell,\beta, h)$ is defined in \rlb{qL}.
The two expectation values on the right-hand side of \rlb{sxsy3} are independent,
if $ \Laa\cap  \Lab=\phi$.

To complete the proof, consider the decomposition of the summation over all lattice sites $x,y\in \Lambda_L$
\eq
\sum_{x,y\in\LaL}(\cdots)=\mathop{\sum_{\kappa,\kappa'=1}^K}_{{\rm s.t.}\,\kappa\neq{\kappa'} }\sum_{x\in\Lal^\kappa,\,y\in\Lal^{\kappa'}}(\cdots)
+\sum_{(x,y)\in\mathcal{R}}(\cdots),
\lb{sumdec}
\en
where $\mathcal{R}$ is defined by the above equation.
The set $\mathcal{R} (\subset \Lambda_L^2)$ is defined by 
a set of pairs $x$ and $y$, such that 
$x,y\in\Lal^\kappa\cup \Lal^{\kappa'}$ for some
$\kappa, \kappa'$ satisfying
$\Lal^\kappa\cap \Lal^{\kappa'} \neq \phi$.
It is crucial to us that the first sum is over $K(K-1)\,\ell^{2d}$ terms and the second sum is over $L^{2d}-K(K-1)\,\ell^{2d}$ terms.
We then find
\eq
\bbE\sum_{x,y\in\LaL}\bigl(\sbkt{\sigma_x\sigma_y}_{L,\beta;{\bm J, h \bm g},\bsb(\beta,{\bm J, h \bm g})}\bigr)^2
\le K(K-1)\,\ell^{2d}\,\bigl\{\qEA(\ell,\beta, h)\bigr\}^2+\bigl\{L^{2d}-K(K-1)\,\ell^{2d}\bigr\},
\lb{sxsy5}
\en
where we have used \rlb{sxsy4} for the first sum in the right-hand side of \rlb{sumdec} and $\bigl(\sbkt{\sigma_x\sigma_y}_{L,\beta;{\bm J, h \bm g},\bsb(\beta,{\bm J, h \bm g})}\bigr)^2\le1$ for the second sum.
Note that the optimal choice of the number of translated copies is
\eq
K=\Bigl\lfloor\frac{L-1}{\ell+1}\Bigr\rfloor^d,
\lb{Kopt}
\en
which implies
\eq
\limL\frac{K}{L^d}=\frac{1}{(\ell+1)^d}.
\en
We then find from \rlb{sxsy5} that
\eqa
\limsupL\bbE\,\bbkt{(R_L^{1,2})^2}^{(2)}_{L,\beta;{\bm J, h \bm g},\bsb(\beta,{\bm J, h \bm g})}&=
\limsupL\frac{1}{L^{2d}}\,\bbE\sum_{x,y\in\LaL}\bigl(\sbkt{\sigma_x\sigma_y}_{L,\beta;{\bm J, h \bm g},\bsb(\beta,{\bm J, h \bm g})}\bigr)^2
\nl
&\le\Bigl(\frac{\ell}{\ell+1}\Bigr)^{2d}\,\{\qEA(\ell,\beta, h)\}^2+\Bigl\{1-\Bigl(\frac{\ell}{\ell+1}\Bigr)^{2d}\Bigr\}, \lb{BoundR2}
\ena
for any $\ell$. Lemma \ref{DFED} guarantees the existence of $\qEA(\beta, h)
=\lim_{\ell \up \infty}\qEA(\ell,\beta, h)$
, then we get 
\eq
\limsupL\bbE\,\bbkt{(R_L^{1,2})^2}^{(2)}_{L,\beta;{\bm J, h \bm g},\bsb(\beta,{\bm J, h \bm g})}\le
\{\qEA(\beta, h)\}^2,
\lb{R2qEA2}
\en
which is the desired \rlb{main}.  \qed\\

\section{Discussions}
\noindent
\subsection{Another proof for the existence of the EA order parameter}
 It is possible to show the existence of the limit without Lemma \ref{DFED}. Take $\liminf_{\ell\up\infty}$ in the right-hand side in \rlb{BoundR2}, and
let $\bsb_\mathrm{max}(\beta,{\bm J, h \bm g})$ be a boundary configuration that provides the maximum in \rlb{qL}.
Then, using the non-negativity of variance twice, we see 
\eqa
\bbE\,\bbkt{(R_L^{1,2})^2}^{(2)}_{L,\beta;{\bm J, h \bm g},\bsb_\mathrm{max}(\beta,{\bm J, h \bm g})}
&\ge
\bbE\,\bigl(\bbkt{R_L^{1,2}}^{(2)}_{L,\beta;{\bm J, h \bm g},\bsb_\mathrm{max}(\beta,{\bm J, h \bm g})}\bigr)^2
\nl&\ge
\bigl(\bbE\,\bbkt{R_L^{1,2}}^{(2)}_{L,\beta;{\bm J, h \bm g},\bsb_\mathrm{max}(\beta,{\bm J, h \bm g})}\bigr)^2
=\{\qEA(L,\beta, h)\}^2.
\lb{R2qEA2B}
\ena
By taking $\limsupL$ and using \rlb{R2qEA2}, we find
\eq
\limsupL\{\qEA(L,\beta, h)\}^2\le\liminfL\{\qEA(L,\beta, h)\}^2.
\en
Since $\qEA(L,\beta, h)\ge0$, we see that $\limL\qEA(L,\beta, h)$ exists. \qed

\subsection{The Edwards-Anderson model without random field}

In the present model for the perturbation parameter fixed at $h=0$,  
Lemma \ref{DFED} cannot be proven, 
and therefore Theorem \ref{t:Grgen} cannot be proven either. 
For a replica bond overlap, however, a theorem corresponding to Theorem \ref{t:Grgen} can be
proven.
 Consider the EA model with the exchange interactions $\bm J:=(J_{x,y})_{\{x,y\}\in {\cal B}_L}$ satisfying 
i.i.d. standard Gaussian distribution. 
 Define a replica bond overlap between different replica spin configurations with replica indices 
 $\alpha\neq\beta$  by
\begin{equation}
Q^{\alpha,\beta}_L:=\frac{1}{L^d d} \sum_{\{x,y \} \in {\cal B}_L}
\sigma_x^\alpha\sigma_y^\alpha\sigma_x^\beta\sigma_y^\beta.
\end{equation}
We have the following representation corresponding to Lemma \ref{DFED}
\begin{equation}
\lim_{L\uparrow \infty}\mathbb E\langle Q^{1,2}_L\rangle_{L,\beta;\bm J, \bm 0, \bm b} = 
\frac{1}{\beta d}\frac{\partial}{\partial \beta} [\beta f(\beta,0)]+1.
\end{equation}
Since the function $\beta f(\beta,0)$ is a convex function of $\beta$, the inverse temperature $\beta$ plays the same role of $h$ as in the proof of Lemma \ref{DFED}.
Therefore, it is proven  
that its variance vanishes in the infinite-volume limit for almost all $\beta>0$
\begin{equation}
\lim_{L\uparrow \infty}\big[\mathbb E\langle (Q^{1,2}_L)^2 \rangle_{L,\beta;\bm J, \bm 0, \bm b}- (\mathbb E \langle Q^{1,2}_L\rangle_{L,\beta;\bm J, \bm 0, \bm b})^2\big]=0,
\label{vbo}
\end{equation}
as in the proof of  Theorem \ref{t:Grgen}. There is no phase transition observed by the above variance 
in the EA model. This result supports several published arguments for the self-averaging of the replica bond overlap \cite{FH, FH2,KM,NS,NewmanStein1997,NewmanStein2002}. 
On the other hand in the SK model with $N$ spins, 
the replica bond overlap  can be written in terms of the replica overlap
\begin{equation} 
Q^{\alpha,\beta}_N =  \frac{2}{N(N-1)} \sum_{1\leq x<y\leq N}
\sigma_x^\alpha\sigma_y^\alpha\sigma_x^\beta\sigma_y^\beta= \frac{N}{N-1}\Big[(R^{\alpha,\beta}_N)^2-\frac{1}{N}\Big].
\end{equation}
In sufficiently low temperature in the SK model, 
the replica overlap distribution is not supported on a single value of $Q^{1,2}_N=(R^{1,2}_N)^2$ in the infinite-volume limit \cite{AfCh,P0,Pn,T1,T0,T}. Then, the variance of $Q^{1,2}_N$  
does not vanish in RSB phase in the SK model in the infinite-volume limit, unlike our result (\ref{vbo}) in the EA model. 
The present paper indicates that natures of
finite-dimensional spin glass models in any finite dimension differ from those of mean-field spin glass models.
 Although we formulate the theorem for the EA model, the argument can be extended to a broader class of finite-dimensional spin glass models.    

%%%%%%%%%%%%%%%%%%%%%%%%%%%%%%%%%%%
%%%%%%%%%%%%%%%%%%%%%%%%%%%%%%%%%%%
%\newpage
\paragraph*{Acknowledgments}

We express our sincere gratitude to H. Tasaki for critical and helpful comments regarding our view on the result. We are grateful to K. Hukushima for shearing insight based on his extensive research experience in spin glasses. 
We thank %Aernout van Enter, K. Hukushima, 
H. Mukaida and N. Shiraishi 
%, Harukuni Ikeda, Hidetoshi Nishimori, Yoshinori Sakamoto, Masafumi Udagawa, and Mizuki Yamaguchi 
for their valuable discussions.
The present research is supported by JSPS Grants-in-Aid for Scientific Research Nos.~25K07176.

\medskip
\noindent
\paragraph*{Conflict of interest statement}
The authors declare no conflicts of interest.

\paragraph*{Data availability statement}
Data sharing not applicable to this article as no datasets were generated or analyzed
during the current study and article describes entirely theoretical research.

\end{document}